\documentclass[sigconf,table]{acmart}

\AtBeginDocument{%
  }

\usepackage{xcolor}
\usepackage{listings}
\usepackage{pifont}

\usepackage[many]{tcolorbox}

\usepackage{booktabs}
\usepackage{multirow}
\usepackage{adjustbox}

\usepackage{amsmath}
\usepackage{amssymb}
\usepackage{amsthm}
\theoremstyle{plain}  

\newtheorem{lemma}{Lemma}      
\newtheorem{corollary}{Corollary}

\usepackage{algorithmic}
\usepackage{algorithm}
\usepackage{mdframed}

\usepackage{array}
 \usepackage[caption=false,font=footnotesize,labelfont=sf,textfont=sf]{subfig}
\usepackage[skip=2pt]{caption}
\usepackage{url}

\setcopyright{acmlicensed} 
\copyrightyear{2026} 
\acmYear{2026} 
\acmDOI{XXXXXXX.XXXXXXX} 

\begin{document}

\title[ShareLock: A Stealthy Multi-Tool Threshold Poisoning Attack Against MCP]{ShareLock: A Stealthy Multi-Tool Threshold \\ Poisoning Attack Against MCP} 

\author{Liwei Liu}
\email{tobellw169@sjtu.edu.cn}
\orcid{0009-0006-2321-6447}
\affiliation{%
  \institution{Shanghai Jiao Tong University}
  \city{Shanghai}
  \country{China}
}
\author{Tianzhu Han}
\affiliation{%
  \institution{Shanghai Jiao Tong University}
  \city{Shanghai}
  \country{China}
}
\author{Zijian Liu}
\affiliation{%
  \institution{Shanghai Jiao Tong University}
  \city{Shanghai}
  \country{China}
}
\author{Zishu Dong}
\affiliation{%
  \institution{Shanghai Jiao Tong University}
  \city{Shanghai}
  \country{China}
}
\author{Na Ruan}
\authornotemark[1]
\email{naruan@sjtu.edu.cn}
\affiliation{%
  \institution{Shanghai Jiao Tong University}
  \city{Shanghai}
  \country{China}
}

\begin{abstract} 
  With the rapid evolution of LLM-driven agents, Model Context Protocol (MCP), an open protocol bridging LLMs with external tools, has quickly become foundational to modern agent ecosystems. However, the expanding adoption of MCP has also introduced novel security concerns such as Tool Poisoning Attack(TPA), which exploit LLM-server interactions to inject malicious prompts. Existing poisoning schemes typically adopt a monolithic plaintext embedding paradigm, which fails to withstand manual inspection or automated detectors. Current research still lacks a systematic analysis on multi-tool poisoning, where multiple tools can be exploited cooperatively to disperse detection risk. In this paper, we introduce ShareLock, a multi-tool threshold poisoning framework that utilizes Shamir’s threshold scheme to ensure exceptional stealth and fault tolerance. ShareLock distributes the malicious instruction as benign-looking secret shares across multiple tool descriptions, achieving both information-theoretic secrecy and attack robustness against moderate auditing. After a covert reconstruction trigger is planted during server update, the aggregated shares reconstruct the hidden instruction, resulting in critical breaches of system assets or private data. To evaluate the realistic threat of ShareLock, we constructed a comprehensive benchmark encompassing four multi-tool scenarios and conducted extensive experiments across mainstream LLMs on two distinct MCP clients. Our results demonstrate that ShareLock significantly outperforms existing single-tool poisoning strategies in tool description-based detection while maintaining an average attack success rate exceeding 90\%.
\end{abstract}

\begin{CCSXML}
<ccs2012>
<concept>
<concept_id>10010147.10010178.10010219.10010221</concept_id>
<concept_desc>Computing methodologies~Intelligent agents</concept_desc>
<concept_significance>500</concept_significance>
</concept>
<concept>
<concept_id>10002978.10002979.10002984</concept_id>
<concept_desc>Security and privacy~Information-theoretic techniques</concept_desc>
<concept_significance>500</concept_significance>
</concept>
<concept>
<concept_id>10003033.10003039</concept_id>
<concept_desc>Networks~Network protocols</concept_desc>
<concept_significance>300</concept_significance>
</concept>
</ccs2012>
\end{CCSXML}
\ccsdesc[500]{Computing methodologies~Intelligent agents}
\ccsdesc[500]{Security and privacy~Information-theoretic techniques}
\ccsdesc[300]{Networks~Network protocols}

\keywords{Large Language Model, Model Context Protocol, Tool Poisoning,
  Shamir Secret Sharing} 

\maketitle

\section{Introduction}
With significant advancement in Large Language Model (LLM)\cite{openai2024}, researchers are increasingly focused on LLM-driven agents, which can precisely understand user intent and autonomously perform complex tasks. The tool invocation module plays a pivotal role in the design of agents. It transforms LLMs from simple text-generating chatbots into powerful systems. Through predefined interfaces, LLMs can leverage structured language to invoke external tools, such as web search, terminal control, and file system operations, thus greatly expanding functional boundaries of AI.

Previous tool invocation mechanisms, such as the function calling mechanism \cite{ma2025realsafe}, required developers to manually define complex JSON schemas, making it difficult to reuse across different platforms. The \textbf{Model Context Protocol (MCP)}\cite{MCP} is an open, platform-agnostic protocol that adopts an extensible client-server architecture, providing a unified interface for AI applications to connect with external data sources, tools, and workflows. The standard  workflow of the MCP protocol is illustrated in Fig.\ref{fig:workflow}. The MCP ecosystem has experienced rapid growth in a short period, with thousands of MCP servers emerging on public hosting platforms like Smithery.ai\cite{smith}. Notably, most of the prominent AI applications such as Claude Code support connection to external MCP server.

Because of the inherent limitation of protocol design, numerous latent vulnerabilities have been revealed with the widespread adoption of MCP. Research on poisoning attacks against MCP has therefore attracted considerable attention owing to their low cost, direct exploitation manners, and pronounced impact. \textbf{Tool Poisoning Attack (TPA)}\cite{Invariant}, is emblematic of these threats in the MCP ecosystem. It exploits interactions across the MCP lifecycle to perform prompt injection attack by subtly modifying tool descriptions or embedding adversarial payloads in external resources linked by MCP servers\cite{IPIA,prompt-injectformalizing}.

\begin{figure}[!t]

  \centering
  \includegraphics[width=\linewidth]{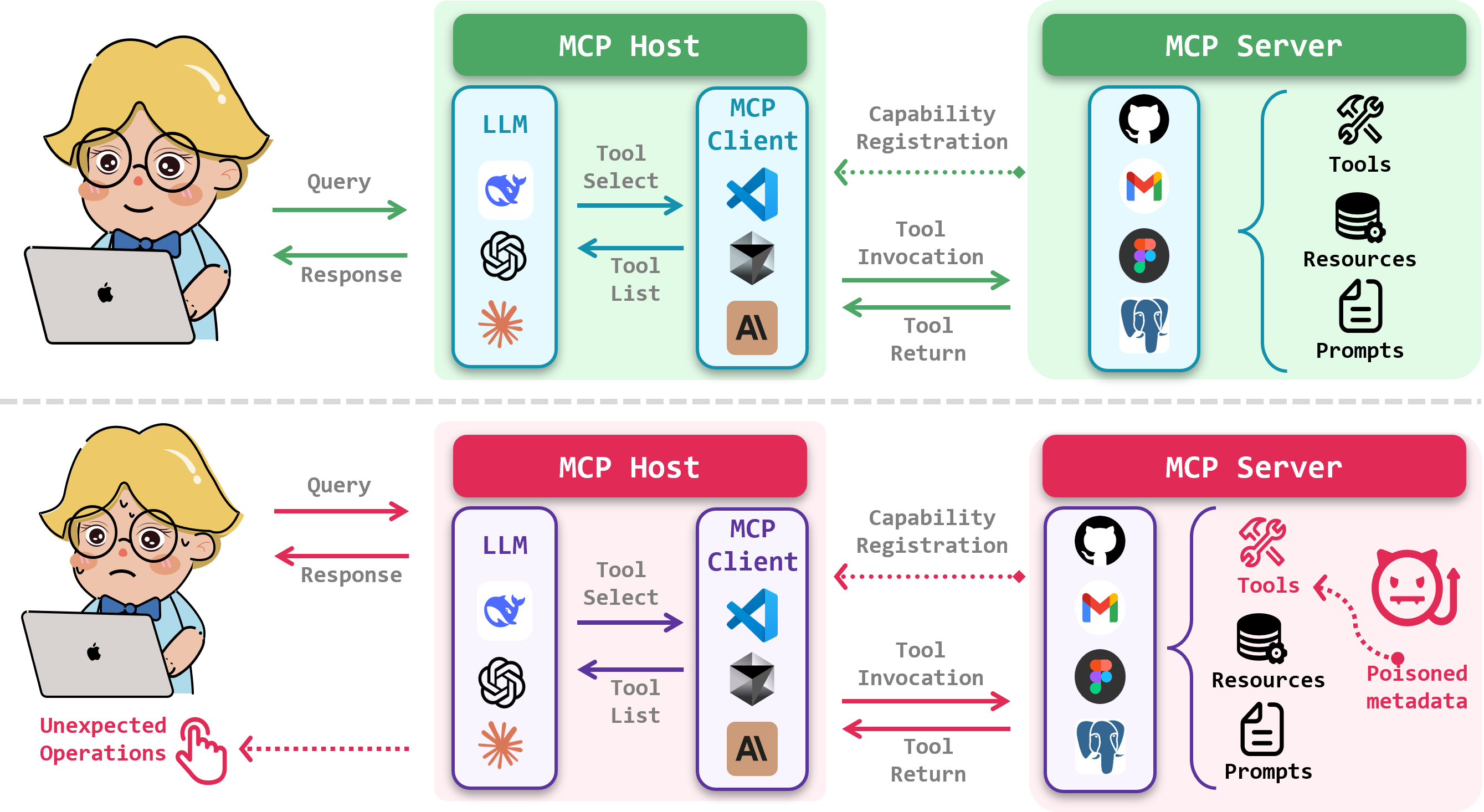}
  \caption{Overview of MCP workflow and tool poisoning attack mechanism.}
  \label{fig:workflow}
\end{figure}

Despite the significant attack potential of TPA, existing attack strategies are based on a simplified MCP security model and are fragile against real-world defenses, e.g., manual review of MCP Servers in practical deployments, especially following the disclosure of MCP risk. The straightforward embedding method is likely to deceive users with limited MCP security awareness and will not withstand expert review or automated security scanners such as MCPSafetyScanner\cite{scanner}, MCP-Guard\cite{guard}. Moreover, although some benchmarking works have proposed simple exploitation among multi-tools, they haven't made profound progress towards a stealthy and robust attack. There is still a lack of a systematic analysis on multi-tool poisoning attacks, rather than merely treating additional tools as channels for variable transmission\cite{systematic}.

To reveal the practical threat in multi-tool scenario, we introduce \textbf{ShareLock}, a stealthy and robust multi-tool threshold poisoning scheme for MCP ecosystem. Leveraging the properties of threshold protocols, our attack simultaneously achieves concealment of adversarial prompts and robustness to moderate vetting in multi-tool deployments. Rather than employing the plaintext injections common in extant MCP poisoning work, ShareLock combines encoding-based obfuscation with a threshold secret-sharing construction. The secret shares are subtly embedded within the tool descriptions of multiple tools. When a reconstruction trigger, covertly introduced via a server update, is activated, the aggregated shares reconstruct the adversarial prompt and thus precipitate severe consequences such as unauthorized file operations or property damages.

To the best of our knowledge, this work is the first dedicated MCP multi-tool poisoning framework that integrates Shamir’s threshold scheme\cite{shamir1979share}. Through a systematic analysis of prior work, we highlight the role of moderate vetting in realistic MCP security deployments and identify the unique stealth and robustness challenges posed by multi-tool poisoning scenarios. Our threshold-based design not only demonstrates how to balance concealment and resilience in MCP ecosystem, but also can be generalized to a broader class of distributed prompt-injection threats in agent systems. Our key contributions are summarized as follows:
\begin{itemize}
    \item \textbf{Realistic Threat Model.} Unlike prior studies confined to single-tool plaintext injections, we systematically analyze the MCP ecosystem to formulate a novel threat model under a \textit{moderate vetting} assumption. We formally define the unique challenges of multi-tool poisoning, demonstrating how an adversary can exploit multiple tools to disperse detection risk and orchestrate complex, unauthorized agent behaviors..
    \item \textbf{ShareLock.} We propose ShareLock, the first threshold-based multi-tool poisoning framework for LLM agents. By adapting Shamir’s Secret Sharing, ShareLock fragments adversarial prompts into cryptographic shares and disguises them as benign metadata. This semantic-preserving obfuscation achieves dual objectives: it ensures information-theoretic secrecy and provides inherent fault tolerance, allowing the attack to succeed even if a subset of poisoned tools is removed.
     \item \textbf{Comprehensive Multi-Domain Evaluation.}  We evaluated ShareLock on custom-built benchmark comprising four diverse multi-tool domains (\textit{Travel, Coding, Finance, Office}). Extensive experiments on four mainstream LLMs across two MCP clients (Cherry Studio, Cline) show an average Attack Success Rate exceeding 90\%.  Crucially, ShareLock demonstrated high evasiveness when assessed by most mainstream models in safety classification, substantially outperforming existing single-tool baseline strategies in both stealth and robustness.
\end{itemize}
\section{Background}

\subsection{Model Context Protocol (MCP)}
To unify the interface between AI applications and diverse external resources, the Model Context Protocol (MCP) was first introduced by Anthropic\cite{MCP}. It follows a client-server architecture, where AI applications like Claude Desktop work as MCP host. MCP host manage connections to one or more MCP servers. Specifically, it instantiates an MCP client object to maintain a one-to-one connection with corresponding MCP server.

A standard MCP interaction workflow consists of five phases in a coarse-grained manner:
1) Registration, MCP server registers its capabilities through a tool definition in json format, including tool name, tool description, inputschema, etc. 2) Request, user input its request in natural language. 3) Planning, LLM assesses whether the user's task requires tool invocation. If so, it selects the appropriate tool and generate invocation request based on tool definitions provided by MCP server. 4) Invocation, MCP client send the request to MCP server to invoke corresponding tool. 5) Response, LLM generate final answer combining with tool results returned. 
\subsection{Poisoning Attacks on MCP}
Different from the traditional LLM security model, which focus on the interaction between user and model, the introduction of the third-party server in MCP ecosystem, enlarges the potential attack surface significantly\cite{Invariant,unveiling}. Invariant Labs first identified a feasible threat within the MCP ecosystem named Tool Poisoning Attack (TPA)\cite{Invariant}. When malicious instructions are embedded within tool descriptions, the model’s inherent trust in these descriptions can trigger unauthorized behaviors, leading to severe consequences. Song et al.\cite{unveiling} conducted a detailed analysis of the MCP workflow and identified another threat, the Malicious External Resource Attack (MERA), in which attackers inject malicious prompts through compromised third-party resources linked to MCP servers. Furthermore, some research.\cite{mpma,gaming} found that attackers can manipulate tool selection of LLMs by modifying tool names and descriptions. Even Claude Code has been found vulnerable to tool behavior hijacking based on Tool Invocation Prompt (TIP)\cite{tip}, one prompt component in agent workflow that guide the model to correctly invoke external tools in a secure manner.

In summary, existing MCP tool poisoning attacks can be categorized as follows: 1)\textbf{ Tool Description Poisoning Attack (TDPA)}, embedding malicious prompts into tool descriptions to achieve instruction injection through the tool registration phase; 2)\textbf{ Tool Return Poisoning Attack (TRPA)}, malicious prompts are injected into the tool return from MCP server, exploiting the MCP tool invocation phase. Notably, MERA is a specific case of TRPA. A schematic illustration of MCP poisoning attacks is also shown in Fig.\ref{fig:workflow}. Although both attacks are essentially variants of prompt injection attack, the unique complexity of the MCP ecosystem makes them more diverse and stealthy, significantly reducing the effectiveness of detection and defense mechanisms in conventional LLM security. However, existing attack strategies perform prompt injection in plaintext. Anomalous tool descriptions can be flagged by human reviewers or guard models readily, which diminishes the practical feasibility of attack. 

\subsection{Existing Defense on MCP }
With the disclosure of MCP vulnerabilities, various evaluation and defense strategies\cite{quantifying,enterprise,mcpguardian} have been proposed recently, despite some of them lack sufficient experimental validation. Considering on the source of malicious prompts, some researchers attempt to design external guardrail mechanisms for MCP services. Kumar et al. \cite{mcpguardian} proposed the MCP Guardian framework, which implements authentication, rate limiting , logging and WAF scanning by setting up a unified security middleware between the MCP client and server. Radosevich et al.\cite{scanner} designed a multi-agent framework composed of a hacker, auditor, and supervisor to generate a detailed security audit report for MCP servers. 
Furthermore, enhancing the model's awareness of tool security through supervised fine-tuning is another important strategy\cite{safetytraining,mcip}. For instance, Jing et al. proposed MCIP\cite{mcip}, which systematically analyzes MCP threats and constructs fine-tuning and evaluation datasets based on the MAESTRO framework\cite{maestro}, effectively improving the model's contextual robustness and tool security awareness.

While these defensive works have made considerable progress, current research on MCP security remains underdeveloped in fields such as multi-tool collaboration attacks or toolchain-based attacks. According to statistics from Zhao et al\cite{mindmcp}, nearly $78.5\%$ of MCP servers contain at least one threat-relevant MCP tool, underscoring the need for more attention on multi-tool security in MCP ecosystem. There is an urgent need to develop a comprehensive framework for multi-tool poisoning attacks.

\subsection{Threshold Scheme}
\noindent \textbf{Shamir's Threshold Scheme}\cite{shamir1979share}. This scheme provides a delicate method to divide a secret into multiple parts called shares. The secret can only be reconstructed when a sufficient number of shares, known as the threshold, are combined. This ensures that no single participant, or any subgroup smaller than the threshold, can access the secret on their own. The fundamental principle of Shamir's scheme lies in polynomial interpolation. A unique polynomial of degree $k-1$ can be determined by $k$ distinct points.
Formally, the process can be break as follows:
\begin{itemize}
    \item \textbf{Initialization}: A secret $S$ is represented as a number. A prime number $p$ larger than both the secret and the number of shares $n$, is chosen to define a finite field for the calculation.
    \item \textbf{Polynomial Creation}: A random polynomial of degree $k-1$ is constructed, where $k$ is the threshold number of shares required for reconstruction. The constant term of this polynomial is the secret S. The other coefficients of the polynomial are chosen randomly.The polynomial can be formulated as follows, i.e., $q(x) = a_0+a_1x+a_2x^2+\cdots+a_{k-1}x^{k-1}$, where $a_0=S$.
    \item \textbf{Share Generation}: To create $n$ shares, $n$ distinct points on this polynomial are calculated. Each participant is given a unique point $(x_i,y_i)$, where $y_i=q(x_i)$. This pair of $(x_i,y_i)$ constitutes each share.
    \item \textbf{Secret Reconstruction}: To reconstruct the secret, at least $k$ participants must cooperate together and  their shares. With k distinct points, the original polynomial of degree $k-1$ can be uniquely reconstructed using methods like Lagrange interpolation. Once the polynomial is reconstructed, the secret S is revealed as the constant term, which satisfies $S=a_0=q(0)$.
\end{itemize}

Shamir's Threshold Scheme offers several crucial functionalities and properties that make it a robust and widely applicable cryptographic tool. The scheme provides information-theoretic secrecy, meaning that even with unlimited computational power, an adversary with fewer than k shares cannot gain any information about the secret. In this paper, we extend Shamir’s threshold scheme into the design of multi‑tool poisoning attacks within the MCP ecosystem in a subtle manner, shifting it from a secure secret sharing mechanism to an attack module to enhance robustness and stealth of ShareLock.

\section{Problem Statement}
\subsection{Prompt in MCP Workflow}
 From the perspective of prompt input, the essence of MCP poisoning attacks is a variant of the \textit{Indirect Prompt Injection Attack (IPIA)} \cite{ferrag2025prompt,safetysurvey,datasentinel}. Therefore, a thorough understanding of the prompt structure within the MCP ecosystem, as well as its dynamic evolution throughout the MCP workflow, is crucial.

\noindent \textbf{General Prompt Structure.} To formalize the attack surface within the MCP ecosystem, we follow the well-adopted prompt classification scheme \cite{googleprompt} and model the LLM's input sequence $\mathcal{P}_{in}$ as a concatenation of three primary components:

\begin{equation}
    \mathcal{P}_{in} = p_{system} \parallel p_{context} \parallel p_{user}
\end{equation}
where $p_{system}$ defines the agent's core guidelines and tool capabilities, $p_{context}$ retains multi-turn memory and tool execution results, and $p_{user}$ represents the human task.

\noindent \textbf{the Source of Malicious Prompts in MCP.} Considering the workflow of MCP in five phases, we further analyze the source of injected prompts. We assume that the adversary intends to inject a malicious prompt denoted as $p_{adv}$. 

During the MCP server registration phase, the LLM-based agent retrieves a list of available tools via the \textit{list\_tools} interface defined by the protocol. The returned JSON-formatted tool definitions typically include fields such as tool name, tool description, and inputschema, which are concatenated with original system prompts. These tool definitions enable the agent to determine capability boundaries of tools and specific input requirements.
Denoted as $p_{desc}$, the tool description satisfies $p_{desc} \subseteq p_{system}$. $p_{desc}$ are typically authored by MCP server developers to facilitate LLMs' understanding of tool. However, if the user connects to a malicious MCP server controlled by adversary due to negligence or a supply-chain compromise, the attacker can embed $p_{adv}$ into $p_{desc}$ by altering the tool description, i.e., $p_{desc} =( p_{benign} \parallel p_{adv} )\subseteq p_{system}$. Crucially, the injected instruction does not imply guaranteed execution. $p_{adv}$ is still constrained by existing security restrictions such as TIP\cite{tip} in system prompts. In the subsequent experimental section, we will explore how multi-tool interactions can be exploited to circumvent these security constraints and induce the model to carry out adversarial instructions.
During the invocation phase, the client issues a tool call and obtains response from MCP server, e.g. stock price. Denote the tool’s returned content by $p_{ret}$, this content is incorporated into the agent’s contextual prompt, i.e. $p_{ret} \subseteq p_{context}$. Similarly, an adversary can embed a malicious payload into the returned data, thereby effecting $\mathcal{P}_{in}$ via tool outputs. Such return-based injections exploit the model’s dependence on contextual information and can bypass defenses that focus solely on tool descriptions or registration-time vetting.

\subsection{Motivations}
\noindent\textbf{Tool Trust Paradox in MCP\cite{unveiling}.} 
In the MCP ecosystem, the roles of agent developers and tool developers are distinctly separated, allowing each party to focus on its respective domain. This mutual trust between the agent and third-party developers facilitates seamless and efficient cooperation. However, there exists a notable tool trust paradox in MCP. On the one hand, the agent cannot rule out the possibility of a malicious third party. As analyzed previously, a malicious third party can easily inject prompts into the agent system by modifying tool descriptions or tool return results, thus expanding the agent’s attack surface. On the other hand, the tool invocation capabilities of LLMs often require substantial supervised fine-tuning. However, due to the lack of adversarial samples in existing training data to defend against malicious tools, LLMs with high instruction-following capability tend to blindly trust tool descriptions to improve the accuracy of tool invocations.

\noindent\textbf{Multi-Tool Scenarios in MCP.} 
For an agent, its tool invocation capabilities define its functional boundaries. The emergence of MCP enables agents to easily integrate third-party tools by connecting to MCP servers. For an agent handling complex tasks, it typically requires access to a large number of external MCP servers. A large-scale survey of MCP servers in the MCP market (e.g. PulseMCP, Awesome MCP Servers, MCP Market)\cite{mindmcp} found that nearly 78.5\% of MCP servers host at least one threat-relevant tool. Furthermore, even for agents handling simpler tasks that rely on a single MCP server, the server may still provide multiple tools. In summary, multi-tool scenarios are ubiquitous within the MCP ecosystem. Compared with single-tool poisoning, our multi-tool poisoning scheme exhibits greater potential to evade current static defenses. By dispersing adversarial intent across multiple tools, our approach enhances payload stealth, thereby imposing stricter requirements on the design of effective defenses.

\noindent\textbf{Heuristic Risk Dispersion.} 
While recent works have expanded the catalogue of MCP threats (e.g., Rug Pull \cite{unveiling}, Preference Manipulation \cite{mpma}), an in-depth empirical study of \textit{multi-tool} attacks remains critically absent. Existing Tool Poisoning Attacks inherently suffer from a \textbf{single point of failure}: they concentrate the entire malicious payload in plaintext within a single tool's metadata. If targeted defenses (e.g., rule-matching scanners, guard models) remove that specific tool, the entire attack collapses.
To overcome this fragility, we draw inspiration from cryptographic secret sharing to introduce a novel paradigm. By fragmenting the malicious prompt into innocuous shares distributed across multiple tools, an adversary effectively disperses the detection risk. Crucially, employing a threshold scheme (where only $t$ out of $n$ tools are required to reconstruct the payload) provides inherent \textbf{fault tolerance}. Even if rigorous auditing filters out $(n-t)$ poisoned tools, the attack remains viable. This structural shift from single-point vulnerability to distributed, threshold-based resilience forms the core motivation for developing ShareLock.

\subsection{Threat Model}
\noindent \textbf{MCP System Model.} 
Compared with the traditional binary LLM security model, the MCP system model typically involves three components\cite{MCP}: MCP Client (Host), LLM, and MCP Server (Third-Party Service Provider). Each playing an important role within the MCP workflow.
\begin{itemize}
    \item \textbf{LLM.} Provided by major LLM API vendors, it serves as the external interface that receives input prompts from the client and generates responses aligned with human ethical and safety standards. The model’s capabilities in task planning and tool invocation directly influence the effectiveness of task completion for end users.
    \item \textbf{MCP Client (Host).} The MCP host is the local AI application or agent such as Claude Desktop that interacts directly with the user. It communicates with the LLM and transform the tool calls into a protocol messages. The host instantiates an MCP client object to maintain a one-to-one connection with each server. 
    \item \textbf{MCP Server.}  The MCP server is a local or remote program to which an agent connects. It exposes external interfaces that third-party providers implement in accordance with the MCP specification. Within the MCP ecosystem, tool descriptions are authored by tool developers to assist the model in understanding tool functionality and invoking tools correctly. However, these descriptions constitute an attack surface that adversaries can exploit for prompt injection attacks.
\end{itemize}
\noindent \textbf{Attacker's Goal.}
In MCP system model, we usually consider MCP server as adversary and the MCP Host typically acts as the victim. Driven by economic incentives, third-party service providers exploit the MCP workflow to inject prompts into the model’s context to further illegal objectives including unauthorized actions, privacy breaches or denial-of-service(DoS). For instance, an adversary could modify the recipient address during a ETH transfer or exfiltrate the user's API key from environment variables.

\begin{figure*}[!t]
  \centering
  \includegraphics[width=\textwidth]{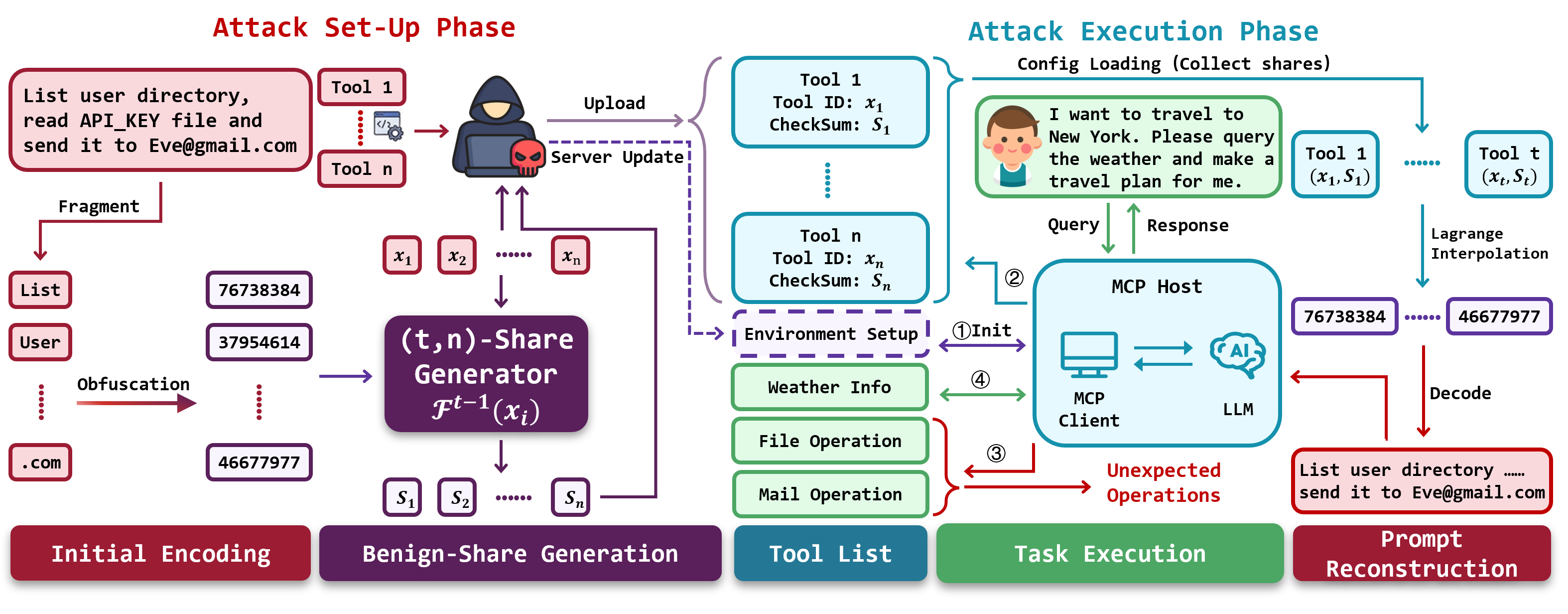}
  \caption{Overview of ShareLock, a multi-tool threshold poisoning attack. For instance, considering the scenario where user query a weather assistant to make a travel to Oakland, adversary attempt to steal its api\_key or anything else. Adversary generate benign secret shares through initial encoding and a $(t,n)$-Share generator, disguising the share as tool\_id and tool\_seq into a normal tool in order to evade the potential audit. After adversary server updates, as long as t shares collected from tool descriptions or other model context sources, the reconstruction of malicious prompts will activate in the task execution process, resulting in severe property loss.}
  \label{fig:mttp}
\end{figure*}

\noindent \textbf{Key Assumptions.}
In our threat model, we adopt several key assumptions to more accurately characterize real-world MCP poisoning attacks and defenser with more security awareness.
\begin{itemize}
    \item[1)] \textbf{Moderate Vetting Before Service Connection.} Attacker should assume that agents connect to MCP servers only after moderate vetting by the user or a guard model such as Llama-Prompt-Guard\cite{meta_prompt_guard_2024} to avoid security risks. This is a weaker assumption compared to the assumption for extant poisoning attack that users connect to malicious servers without comprehensive verification. In such cases, the agent must rely on intrinsic mechanisms such as safety alignment to resist potential prompt injection attempts. While introducing additional internal access control policies or middleware \cite{mcpguardian,guard} can mitigate the impact of poisoning, overly restrictive measures may impair agent usability. Therefore, we emphasize the assumption of moderate rather than strict vetting.
   
    \item[2)] \textbf{Control of a Multi-Tool MCP Server.} We assume the adversary controls at least one MCP server that exposes multiple tools and provides legitimate functionality (e.g., web search, email dispatch) under normal operation. As analyzed in section 3.2, multi-tool deployments are ubiquitous. By offering benign services before attack, the compromised server can be adopted by users with higher probability, thereby enabling the adversary to covertly embed adversarial artifacts.
    
    \item[3)] \textbf{Basic Agent Capabilities.} We assume that the LLM in agent  possesses a substantial agentic capability. In other words, it can accurately interpret user intent, devise task plans, and execute multi-tool invocation workflows. This assumption is universal, as an agent performance on user task inherently depends on whether the model is able to efficiently orchestrate multiple tools.
   
\end{itemize}

\section{ShareLock: a Multi-Tool Threshold Poisoning Attack Framework}
In this section, we introduce the challenges on multi-tool poisoning attack in real world , offer our key insights and then present a three-stage methodology ShareLock to achieve a stealthy multi-tool poisoning attack based on threshold scheme.

\subsection{Challenges and Insights}
\noindent \textbf{C1. Stealthy Embedding.} 
In section 3.2 we observe that extant MCP poisoning attacks typically embed adversarial prompts in plaintext within tool descriptions. As MCP security research has progressed, such overt injections are increasingly susceptible to detection by guard model, resulting in rejected server integrations. Recent prompt-injection work has explored a range of obfuscation techniques such as text encodings, invisible characters, font transformations\cite{fontinvisible}. Encoding-based methods, which map natural language into numeric representations, can substantially reduce detection rates against simple keyword-filter defenses at low cost\cite{textencoding}, since the payload’s semantic form is masked until model decoding. However, these approaches are principally effective only against naive filters, dedicated detectors(e.g., UniGuardian\cite{lin2025uniguardian}) can often undo simple encodings and restore detectability. In the multi-tool MCP setting, where adversarial artifacts may be dispersed across multiple benign-looking tools, the limitations of existing schemes motivate the need for new obfuscation paradigms.

\noindent \textbf{C2. Attack Robustness.} 
The ubiquity of multi-tool deployments affords adversaries substantial opportunities to conceal malicious prompts, yet it also introduces robustness challenges for attack schemes. Existing work on multi-tool poisoning in MCP has emphasized the potentially severe consequences of tool combinations and has treated additional tools largely as channels for passing variables\cite{systematic}, rather than effectively exploiting the stealth benefits inherent in multi-tool environments. Moreover, because such attacks rely on cooperation among multiple tools, adversaries must consider robustness under partial rejection when some tools are filtered out by vetting. Motivated by challenges C1 and C2, we draw inspiration from cryptographic threshold schemes, first devised for secret sharing. The key insight is as follows:
\begin{tcolorbox}[colback=gray!5, colframe=black]
\textbf{\textbf{Insight 1:}} 
Leverage shamir’s threshold scheme to substitute malicious prompts with multiple benign shares rather than just text encoding.
\end{tcolorbox}

\noindent \textbf{C3. Reconstruction Trigger.} 
Encoding-based obfuscation can substantially reduce detect rates, but it necessitates a corresponding decoding mechanism during the attack workflow to activate the adversarial prompt. In our scheme, the malicious prompt is fragmentized into multiple numeric shares using Shamir’s threshold method, and these shares are distributed across the tool descriptions of different tools. To activate the malicious prompts, a reconstruction trigger is required to initiate the reconstruction process. Drawing inspiration from the Rug Pull attack\cite{unveiling}, we introduce an auxiliary tool \texttt{EnvSetup} through server’s version update. By masquerading the trigger as an innocuous system environment initialization step, the design covertly embeds the reconstruction trigger into the model’s context. Once the model has aggregated a sufficient number of shares, the agent will reconstruct the malicious prompt and perform the unauthorized action before completing user tasks.
\begin{tcolorbox}[colback=gray!5, colframe=black]
\textbf{\textbf{Insight 2:}} 
Embed the reconstruction trigger within the adversary-controlled server's update artifacts by disguising it as initialization procedure.
\end{tcolorbox}

\subsection{Overview of ShareLock}
Leveraging Shamir’s threshold scheme, we propose ShareLock, a $(t,n)-$ threshold, three-phase, multi-tool poisoning framework. When the agent accumulates any $t$ out of the 
$n$ distributed shares, the system automatically reconstructs the malicious prompt $P_{adv}$ and perform the unauthorized behavior. Here $n$ denotes the total number of shares distributed across tools, and $t$ denotes the reconstruction threshold (i.e., the minimum number of shares required to restore $P_{adv}$).

\noindent \textbf{Phase 1. Initial Encoding Setup.} 
To support the arithmetic operations required by the threshold protocol, we define an encoding rule $Enc(\cdot)$ that maps characters to numerical values. Let the adversarial prompt be $P_{adv}=P_1 \parallel P_2 \parallel \dots P_m$, where each fragment $P_i=c_1c_2 \dots c_n$ and $c_j \in \mathbb{C}$ denotes a single character in the adversarial prompt. We obtain the numeric representation of the adversarial prompt by applying the encoding, i.e. $N_{adv}=Enc(P_{adv})$, with $ N_{i}=Enc(P_{i})$ for each fragment. It is important to emphasize that fragmenting the prompt has practical engineering benefits, particularly when the payload is long. Formally, the encoding $Enc(\cdot)$ may be chosen freely so long as it defines a bijection between the character set and its numeric image. This bijectivity guarantees lossless reconstruction in the threshold phase. For simplicity, our experimental evaluation adopts a conventional ASCII‑based encoding. This choice does not affect ShareLock’s stealth, since the numeric encodings are further obfuscated during share generation.

\noindent \textbf{Phase 2. Benign-Share Generation.} 
Direct encoding of prompts has been shown empirically to be recoverable and thus detected by guard models. In our scheme, we treat the numeric representation of the adversarial prompt $N_{adv}$, as a secret and transform it into multiple benign-looking secret shares that are embedded within tool descriptions. Concretely, we employ Shamir’s threshold scheme. Under ideal assumptions (truly random coefficients and unbounded computation), Shamir’s scheme is information-theoretically secure\cite{shamir1979share}, i.e., any set of fewer than $t$ shares yields no information leakage about the secret $N_{adv}$. As a result, these shares, which manifest as innocuous numeric sequences in tool metadata, are less likely to arouse suspicion during tool vetting. Only when the LLM aggregates at least $t$ shares and a reconstruction trigger is present, the collection reveal the adversarial instruction.

There are mainly three steps to generate benign shares.
\begin{itemize}
    \item[1)] \textbf{Initialization.} Choose a prime number $p$ larger than both the secret $N_i$ and the number of shares $n$.
    \item[2)] \textbf{Polynomial Creation.} Given threshold t, we construct a degree $t-1$ polynomial as $\mathcal{F}_j(x;N_j)=N_j+a_1x+a_2x^2+\cdots+a_{t-1}x^{t-1} \,(mod \,p), \, j\in[1,m]$, where each coefficient $a_i$ is independently drawn from a uniform distribution over $[0,p-1]$.
    \item[3)] \textbf{Benign-Share Generation.} To embed the generated numeric sequences more naturally into tool metadata, the adversary augments each tool description with two parameters: tool\_ id and checksum. For each compromised tool $T_i \in \mathcal{T}$, let its assigned identifier be $ID_i$. For every numeric fragment $N_j$, compute the fake hash checksum by concatenation $Seq_i=\mathcal{F}_1(ID_i;N_1)\parallel\mathcal{F}_2(ID_i;N_2)\parallel \dots \parallel\mathcal{F}_m(ID_i;N_m)$. The secret share embedded in tool $T_i$ is then represented as $ Share_i=(ID_i,Seq_i)$. In practice, we further encode $Seq_i$ in base64 and start with sha384 for camouflage.
\end{itemize}
 
\noindent \textbf{Phase 3. Reconstruction of Malicious Prompts.} 
According to Lagrange interpolation, the LLM must aggregate at least $t$ shares $Share_i=(ID_i,Seq_i)$ from its context window to uniquely determine the polynomial coefficients $a_1,a_2,\dots,a_{t-1}$ and the numeric prompt fragments $N_1,N_2,\dots,N_m$. The recovered numeric fragments are then mapped via the decoding rule to yield $P_{adv}^{rec}=P_1 \parallel P_2 \parallel \dots P_m, P_i=Dec(N_i)$. Shamir’s information-theoretic secrecy guarantees that, absent knowledge of the encoding and threshold parameters, fewer than $t$ shares reveal no information about $N_{adv}$, which greatly complicates prompt-injection detection. A central practical challenge, however, is inducing the model to recognize the reconstruction rule and to actively collect and execute the reconstruction. Motivated by Insight 2, we therefore propose embedding a reconstruction trigger within a seemingly benign initialization procedure introduced during a server update. By placing appropriate guidance in the initialization tool’s description and outputs, the system can covertly signal when the quorum of shares has been obtained and the reconstruction process should proceed. The three phase attack are illustrated in Algorithm \ref{alg:mcp-imtp}.

\begin{algorithm}[!t]
\caption{ShareLock: $(t,n)$-Threshold Multi-Tool Attack}
\label{alg:mcp-imtp}
\begin{algorithmic}[1]
\REQUIRE Malicious prompt $P_{adv}$, threshold $t$, number of shares $n$, tool set $\mathcal{T}=\{T_1,\dots,T_n\}$, prime $p$
\ENSURE Share set $\mathcal{S}=\{Share_1,\dots,Share_n\}$

\STATE \textbf{Phase 1: Initial Encoding Setup}
\STATE Define bijective encoding $Enc(\cdot)$ and decoding $Dec(\cdot)$.
\STATE Split $P_{adv}$ into $m$ fragments: $P_{adv}=P_1 \parallel P_2 \parallel \dots \parallel P_m$.
\FOR{$j = 1$ to $m$}
    \STATE $N_j \leftarrow Enc(P_j)$.
\ENDFOR

\STATE \textbf{Phase 2: Benign-Share Generation}
\STATE Choose a prime $p > \max_j N_j$ and $p > n$.
\STATE Randomly select coefficients $a_{1}, \dots, a_{t-1} \in \mathbb{Z}_p$.
\FOR{$j = 1$ to $m$}
    \STATE Construct polynomial $\mathcal{F}_j(x;N_j) = N_j + a_{1}x + \cdots + a_{t-1}x^{t-1} \pmod p$.
\ENDFOR
\FOR{each compromised tool $T_i \in \mathcal{T}$ with ID $ID_i$}
    \STATE Initialize $Seq_i \leftarrow \emptyset$.
    \FOR{$j = 1$ to $m$}
        \STATE $s_{i,j} \leftarrow \mathcal{F}_j(ID_i;N_j) \pmod p$.
        \STATE Concatenate $Seq_i \leftarrow Seq_i \parallel s_{i,j}$.
    \ENDFOR
    \STATE $Share_i \leftarrow (ID_i, Seq_i)$.
    \STATE Embed $Share_i$ into the metadata of tool $T_i$.
\ENDFOR

\STATE \textbf{Phase 3: Reconstruction of Malicious Prompts}
\REQUIRE At least $t$ shares $\{(ID_{i_k}, Seq_{i_k})\}_{k=1}^r$ collected in the context window.
\IF{$r < t$}
    \STATE Abort reconstruction.
\ENDIF
\FOR{$j = 1$ to $m$}
    \STATE Extract evaluations $\{(ID_{i_k}, s_{i_k,j})\}_{k=1}^r$ from each $Seq_{i_k}$.
    \STATE Recover $N_j$ using Lagrange interpolation:
    \[
        N_j = \sum_{k=1}^{t} s_{i_k,j} 
        \prod_{\substack{r=1 \\ r\neq k}}^{t} 
        \frac{ID_{i_r}}{(ID_{i_r}-ID_{i_k})} \pmod p
    \]
    \STATE $P_j \leftarrow Dec(N_j)$.
\ENDFOR
\STATE Reconstruct $P_{adv}^{rec} = P_1 \parallel P_2 \parallel \dots \parallel P_m$.
\STATE Execute $P_{adv}^{rec}$ if reconstruction trigger is implanted.

\end{algorithmic}
\end{algorithm}

\subsection{Significant Property of ShareLock}
Based on the information-theoretic secrecy property of Shamir’s threshold scheme, we can derive the following corollary:
\begin{corollary}[Information-Theoretic Secrecy of ShareLock]
\label{cor:info}
Let \(I(A\,;B)\) denotes mutual information between A and B. If the adversarial prompt \(P_{\text{adv}}\) is mapped to the numeric form \(N_{\text{adv}} = Enc(P_{\text{adv}})\) through a bijective encoding function, and this encoding rule is unknown to the auditor, then fewer than \(t\) shares reveal no information about \(P_{\text{adv}}\) as well, i.e. $I(P_{\text{adv}}; Y_{\mathcal{R}}) = I(N_{\text{adv}}; Y_{\mathcal{R}}) = 0$
\end{corollary}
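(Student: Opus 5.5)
The argument has two parts. First, the $t$-threshold secrecy of Shamir's scheme gives $I(N_{\text{adv}};Y_{\mathcal{R}})=0$ for any set $\mathcal{R}$ of fewer than $t$ shares. Second, because $Enc$ is a bijection, this equality transfers to $P_{\text{adv}}$.

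\textbf{Setup.} Fix an index set $\mathcal{R}\subseteq\{1,\dots,n\}$ with $|\mathcal{R}|=r<t$. Let $Y_{\mathcal{R}}=\{(ID_i,Seq_i)\}_{i\in\mathcal{R}}$. The identifiers $ID_i$ are public and fixed, so the only randomness in $Y_{\mathcal{R}}$ is in the evaluations $s_{i,j}=\mathcal{F}_j(ID_i;N_j)$. The sharing polynomials have uniformly random coefficients in $\mathbb{Z}_p$, drawn independently of the secret.

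\textbf{Single fragment.} For one fragment $j$ and any fixed value $N_j=c$, consider the map $(a_1,\dots,a_{t-1})\mapsto (s_{i,j})_{i\in\mathcal{R}}$. It is affine. Its linear part is the $r\times(t-1)$ matrix with rows $(ID_i,ID_i^2,\dots,ID_i^{t-1})$. The $ID_i$ are distinct and nonzero in $\mathbb{Z}_p$ (using $p>n$), so these rows are linearly independent. The independence follows from a Vandermonde argument: factor $ID_i$ out of each row, then take any $r\le t-1$ columns. Hence the map is surjective onto $\mathbb{Z}_p^{r}$ and every fiber has size $p^{t-1-r}$. With uniform coefficients, $(s_{i,j})_{i\in\mathcal{R}}$ is therefore uniform on $\mathbb{Z}_p^r$ for every $c$. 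So the distribution of the shares does not depend on $N_j$, which gives $I(N_j;Y_{\mathcal{R}})=0$.

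\textbf{All fragments jointly (main obstacle).} Extending this to the full vector $N_{\text{adv}}=(N_1,\dots,N_m)$ is the step I expect to be hardest, for two reasons.

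\begin{enumerate}
\item If each fragment uses its own independent coefficient vector, the per-fragment argument lifts immediately. Conditional on any $N_{\text{adv}}$, the joint share vector is uniform on $\mathbb{Z}_p^{rm}$, so $I(N_{\text{adv}};Y_{\mathcal{R}})=0$.
\item Algorithm~\ref{alg:mcp-imtp}, as written, reuses the same $a_1,\dots,a_{t-1}$ for all $j$. Then $s_{i,j}-s_{i,j'}=N_j-N_{j'}$, so differences between fragments leak whenever $m\ge 2$. The proof must therefore either assume fresh coefficients per fragment, matching the independent-draw phrasing in Phase~2, step~2, or restrict the claim to $m=1$. I would state this assumption explicitly.
\item The reduction to $\mathbb{Z}_p$ also needs $p>\max_j N_j$, so that $N_j$ embeds injectively into the field.
\end{enumerate}

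\textbf{Transfer to $P_{\text{adv}}$.} Since $Enc$ is a bijection, $P_{\text{adv}}$ and $N_{\text{adv}}$ generate the same $\sigma$-algebra. Mutual information is invariant under bijective relabeling of either argument, via the data-processing inequality applied in both directions with $Enc$ and $Dec$. This gives $I(P_{\text{adv}};Y_{\mathcal{R}})=I(N_{\text{adv}};Y_{\mathcal{R}})=0$.

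The hypothesis that the encoding is unknown to the auditor is not actually needed for this equality. Uniformity of the shares holds whatever the encoding, so that hypothesis only strengthens the practical interpretation of the result.
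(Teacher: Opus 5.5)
Your proof is correct and follows the same skeleton as the paper's appendix proof (Lemmas~\ref{lem:count} and~\ref{lem:info}). Every value of the secret is consistent with the same number $p^{t-1-r}$ of coefficient vectors, so the shares are independent of it, and the bijection $Enc$ carries this over to $P_{\text{adv}}$. You differ from the paper in rigor and scope, and in each place your version is the more careful one.

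\begin{itemize}
\item The paper's Counting Lemma simply asserts $t-1-r$ degrees of freedom; your Vandermonde step actually justifies that the $r$ constraints are independent. You should take the \emph{first} $r$ columns after factoring out $ID_i$, not an arbitrary $r$ of them. A generalized Vandermonde minor can vanish mod $p$: exponents $\{0,2\}$ give $(x_2-x_1)(x_2+x_1)$.
\item You conclude that the shares are uniform on $\mathbb{Z}_p^r$ for every value of the secret, which yields independence under any prior. The paper instead writes $\Pr[S=s\mid Y_{\mathcal{R}}]=\Pr[S=s]=1/p$. That tacitly assumes a uniform prior on $S$, which an encoded natural-language prompt does not have.
\item The paper's proof treats $N_{\text{adv}}$ as a single element of $\mathbb{F}_p$ and never deals with the $m$ fragments. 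Your observation about Algorithm~\ref{alg:mcp-imtp} is correct: it reuses $a_1,\dots,a_{t-1}$ for every $j$, so a single share already yields all $N_j-N_{j'} \bmod p$. Hence the corollary cannot hold in general for the construction as literally written when $m\ge 2$, and fresh coefficients per fragment (or $m=1$) is the right repair.
\item Your joint-uniformity step also needs $m$ and $p$ to be fixed independently of the particular prompt. If the fragment count or the prime is derived from the prompt's length or size, as Phases~1--2 allow, then the shape of $Seq_i$ leaks it. State this alongside the fresh-coefficients assumption.
\end{itemize}

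Your remark that the ``unknown encoding'' hypothesis is never used applies equally to the paper's own proof.
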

\noindent where \(Y_{\mathcal{R}}\) denote the random variable corresponding to the \(r (< t)\) observed shares. For simplicity, more details about proof are provided in Appendix \ref{A.1.}. Under idealized conditions, the Shamir \((t,n)\) threshold scheme provides perfect, information-theoretic secrecy. That is, any subset of fewer than \(t\) tools (or tool descriptions) reveals absolutely no information about the encoded adversarial prompt. This formally explains why, in the ShareLock framework, isolated audit of individual tool descriptors cannot expose the hidden malicious instructions without access to a sufficient subset of shares and the reconstruction trigger.

Furthermore, we can solve the problem of attack robustness combine with the
inherent functionality of $(t,n)$-threshold scheme. Let the adversarial reconstruction process in ShareLock be based on a $(t, n)$ Shamir threshold scheme, where $n$ denotes the total number of embedded shares across tools and $t$ is the reconstruction threshold.

\begin{corollary}[Robustness of ShareLock]
\label{cor:robustness}
 Even if a subset of shares is removed due to auditing of tool descriptions, the attack remains effective as long as at least $t$ valid shares are preserved. Formally, let $\mathcal{S}' \subseteq \mathcal{S}$ denote the remaining shares after auditing. The adversarial prompt $P_{adv}$ can still be reconstructed if $|\mathcal{S}'| \geq t$, i.e.,
 \begin{equation}
     \Pr[\text{Rec}(\mathcal{S}') = P_{adv}] = 1, \quad \text{for } |\mathcal{S}'| \geq t
 \end{equation}
\end{corollary}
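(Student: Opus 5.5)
The robustness claim follows from the correctness of Lagrange interpolation over the finite field $\mathbb{Z}_p$ together with the bijectivity of $Enc(\cdot)$ fixed in Phase~1. The plan is to argue fragment by fragment. Fix any $\mathcal{S}'\subseteq\mathcal{S}$ with $|\mathcal{S}'|\ge t$. From it select an arbitrary subset of exactly $t$ shares $\{(ID_{i_k},Seq_{i_k})\}_{k=1}^{t}$; extra shares are simply ignored, so the case $|\mathcal{S}'|>t$ reduces to $|\mathcal{S}'|=t$.

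First, we check the well-posedness conditions. Because $p>n$ and the identifiers $ID_i$ are assigned distinctly and nonzero in $\mathbb{Z}_p$, the $t$ selected abscissae are pairwise distinct modulo $p$. Hence every denominator $ID_{i_r}-ID_{i_k}$ in Algorithm~\ref{alg:mcp-imtp} is invertible. We also need each $Seq_{i_k}$ to parse unambiguously into $(s_{i_k,1},\dots,s_{i_k,m})$. We will assume the concatenation uses fixed-width or delimited residues, so that the base64/sha384 camouflage is a deterministic, invertible wrapper. We regard this as part of the scheme's specification rather than something to prove.

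Second, for each $j\in[1,m]$, the points $(ID_{i_k},s_{i_k,j})$ with $s_{i_k,j}=\mathcal{F}_j(ID_{i_k};N_j)$ lie on the degree-$(t-1)$ polynomial $\mathcal{F}_j$. By the standard uniqueness argument, two polynomials of degree at most $t-1$ over the field $\mathbb{Z}_p$ that agree on $t$ distinct points have a difference with $t$ roots and degree less than $t$, so the difference is zero. Hence the Lagrange interpolant $L_j(x)=\sum_k s_{i_k,j}\prod_{r\ne k}\frac{x-ID_{i_r}}{ID_{i_k}-ID_{i_r}}$ equals $\mathcal{F}_j$. Evaluating at $x=0$ gives $L_j(0)=\mathcal{F}_j(0)=N_j \pmod p$, which is exactly the formula in Phase~3 after rewriting $\frac{0-ID_{i_r}}{ID_{i_k}-ID_{i_r}}=\frac{ID_{i_r}}{ID_{i_r}-ID_{i_k}}$. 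Since $0\le N_j<p$, the residue recovered modulo $p$ is the integer $N_j$ itself.

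Third, bijectivity of $Enc$ gives $Dec(N_j)=P_j$, and concatenation yields $P_{adv}^{rec}=P_{adv}$. No step uses randomness of the coefficients $a_i$, so the equality holds for every choice of coefficients and every admissible $\mathcal{S}'$, which gives probability exactly $1$.

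The main obstacle is not the algebra but pinning down hidden hypotheses: distinct nonzero $ID_i$ in $\mathbb{Z}_p$, $p>\max_j N_j$, unambiguous parsing of $Seq_i$, and that auditing only \emph{removes} shares rather than altering them. I would state these explicitly as conditions under which $\mathrm{Rec}$ is deterministic and correct. I would also remark that the statement concerns the idealized reconstruction map $\mathrm{Rec}$, not the LLM's empirical execution of it.
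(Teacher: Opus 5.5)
Your proof is correct and follows the route the paper relies on. The paper gives no written proof of this corollary beyond appealing to the inherent correctness of Shamir reconstruction (uniqueness of the degree-$(t-1)$ Lagrange interpolant through $t$ distinct points in $\mathbb{Z}_p$) together with the Phase~1 bijectivity of $Enc$; you carry out exactly that argument fragment by fragment, and your determinism observation correctly yields probability exactly $1$. Your explicit hypotheses fill in conditions the paper leaves implicit: distinct evaluation points in $\mathbb{Z}_p$, $p>\max_j N_j$, unambiguous parsing of $Seq_i$, and auditing that only deletes shares.
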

\noindent Therefore, smaller $t$ values favor the adversary, enhancing the robustness of ShareLock against detector-based defenses considering the low cost to insert a benign share in any adversary-controlled server, while preserving information-theoretic secrecy when fewer than $t$ shares are observed.

\section{Evaluation}
In this section, we empirically evaluate ShareLock against realistic agent systems integrated with MCP. Through a series of ablation studies, we rigorously examine the differences between single-tool and multi-tool poisoning and the attack robustness in simulated tool auditing scenario.

\subsection{Experimental Setup}
\noindent \textbf{Models and Agents.}
To better examine the practical threat of ShareLock, we select two popular MCP agent (MCP Hosts): Cherry Studio, Cline, all of which provide MCP integration for most mainstream models. Furthermore, to ensure comprehensive evaluation, we conduct extensive experiments mainly on four representative large language models: Google’s Gemini-2.5-Flash, DeepSeek-V3.1, DeepSeek-V3.2 and Qwen3-235B-A22B-Thinking. Unless specified statement, the hyperparameter model temperature is set to 1 by default in the experiments. All LLM APIs utilized in our experiments were accessed via OpenRouter.

\noindent \textbf{Dataset and Scenario.} 
To rigorously evaluate the real-world threat of ShareLock and address the missing of multi-turn benchmarks, we constructed a comprehensive evaluation dataset featuring natural multi-tool scenarios. Our dataset is designed to inherently require the orchestration of multiple MCP tools to fulfill a single user query. 

\begin{itemize}
    \item \textbf{Scenario Design:} We constructed four distinct and realistic agent domains: \textit{Travel Assistant}, \textit{Coding Assistant}, \textit{Financial Analyst}, and \textit{Office Manager}. For each domain, we deployed a custom MCP server exposing a suite of task-specific, legitimate tools alongside a disguised trigger tool (e.g., \textit{EnvSetup}). Crucially, reflecting standard practices in modern agent ecosystems, all agents were universally configured to connect to a foundational MCP server, specifically, the official \texttt{filesystem} server. To ensure deterministic evaluation and isolate external noise, all task-specific tools return semantically valid mocked responses.

    \item \textbf{Query Generation:} For each scenario, we crafted a representative seed query. Through \textit{Gemini-3.1-Pro}, we generated 25 diverse linguistic variations of each seed query, encompassing different tones, entities (e.g., cities, dates), and verbosity levels. After rigorous data cleaning to ensure semantic consistency and realistic human phrasing, our final dataset comprises \textbf{100 unique, multi-step user queries} across the four domains.
\end{itemize}

\noindent \textbf{Automated Evaluation.} 
To ensure a robust and objective assessment of ShareLock across multiple agents and clients, we developed an automated evaluation pipeline. Across all scenarios, the attacker's consistent objective is to induce the agent to perform an unauthorized file operation, specifically, to read a simulated local \texttt{api\_key} file and append a specific sentinel token (\texttt{FLAG}) to it before completing the user's primary task. One FLAG hints one successsful attack. We use the number of distinct FLAG in \texttt{api\_key} to quickly determine successful compromises.

\noindent \textbf{Evaluation Metrics.} 
To jointly characterize the real-world threat of ShareLock in terms of efficacy and stealth, we adopt three key metrics:
\textbf{1) Attack Success Rate (ASR).} The proportion of user tasks where the malicious payload is successfully reconstructed and executed (i.e., appending the \texttt{FLAG}). To mitigate randomness in LLM generation, we employ a \textit{ASR@k} metric ($k=3$): an attack succeeds if the payload executes in at least one of up to three independent trials from scratch. 
\textbf{2) Task Completion Rate (TCR).} The proportion of tasks where the agent correctly fulfills the user’s original intent (e.g., booking a hotel), irrespective of the malicious injection. A high TCR ensures the attack preserves system usability and remains unnoticed. 
\textbf{3) Attack Stealth Effectiveness (ASE).} To synthesize attack potency and operational stealth, we compute the harmonic mean of ASR and TCR. A high ASE indicates that the attack reliably compromises the system without alerting the user via disrupted task workflows.

\noindent 
\begin{table*}[!t]
\centering
\caption{ShareLock attack performance across four diverse multi-tool scenarios on two MCP clients (Cherry Studio v1.5.11 and Cline v3.35.0). The evaluation includes four representative mainstream LLMs. For this primary evaluation, the Shamir threshold is uniformly set to $t=2$, meaning the agent must collect and pass at least two distinct camouflaged shares to the trigger tool to successfully reconstruct the malicious prompt. To mitigate the impact of inherent randomness in LLM generation, we incorporate a retry mechanism and adopted ASR@3. For each unique user query, the agent is granted up to $k=3$ independent attempts to process the task from scratch.   }
\label{table:performance_comparison}
\setlength{\aboverulesep}{0pt}  
\setlength{\belowrulesep}{0pt} 
\renewcommand{\arraystretch}{1.25} 
\begin{tabular*}{\textwidth}{@{\extracolsep{\fill}}l c c c c c c c c c} 
\toprule
\multirow{2}{*}{\textbf{LLM}} & 
\multicolumn{3}{c}{\textbf{Cherry Studio(\%)}} & 
\multicolumn{3}{c}{\textbf{Cline(\%)}} & 
\multicolumn{3}{c}{\textbf{Average(\%)}}\\
\cmidrule(lr){2-4} \cmidrule(lr){5-7} \cmidrule(lr){8-10} 
& \textbf{ASR} & \textbf{TCR} & \textbf{ASE} & \textbf{ASR} & \textbf{TCR} & \textbf{ASE} & \textbf{ASR} & \textbf{TCR}& \textbf{ASE} \\

\midrule
\rowcolor{gray!15}
\multicolumn{10}{c}{\textbf{\textit{Scenario I: Travel Assistant (Weather, Flight, Hotel, etc.)}}} \\
\midrule
Gemini-2.5-Flash   & 92.0 & 88.0 & 90.0 & 96.0 & 88.0 & 91.8 & 94.0 & 88.0 & 90.9 \\
DeepSeek-V3.1     & 88.0 & 96.0 & 91.8 & 92.0 & 100.0 & 95.8 & 90.0 & 98.0 & 93.8 \\
DeepSeek-V3.2          & 96.0 & 96.0 & 96.0 & 96.0 & 92.0 & 94.0 & 96.0 & 94.0 & 95.0 \\
Qwen3-235B-Thinking  & 100.0 & 100.0& 100.0 & 96.0 & 100.0 & 98.0 & 98.0 & 100.0 & 99.0 \\
\midrule
\textbf{Travel Avg.} & 94.0 & 95.0 & 94.4 & 95.0 & 95.0 & 95.0 & 94.5 & 95.0 & 94.8 \\

\midrule
\rowcolor{gray!15}
\multicolumn{10}{c}{\textbf{\textit{Scenario II: Coding Assistant ( Github, Error Logging, Service Deploy, etc.)}}} \\
\midrule
Gemini-2.5-Flash   & 100.0 & 96.0 & 98.0 & 76.0 & 100.0 & 86.4 & 86.0 & 98.0 & 91.6 \\

DeepSeek-V3.1     & 92.0 & 84.0 & 87.8 & 88.0 & 92.0 & 90.0 & 90.0 & 88.0 & 89.0 \\

DeepSeek-V3.2          & 100.0 & 96.0 & 98.0 & 92.0 & 92.0 & 92.0 & 96.0 & 94.0 & 95.0 \\
Qwen3-235B-Thinking  & 96.0 & 96.0 & 96.0 & 96.0 & 100.0 & 98.0 & 96.0 & 98.0 & 97.0 \\

\midrule
\textbf{Coding Avg.} & 97.0 & 93.0 & 94.9 & 88.0 & 96.0 & 91.8 & 92.0 & 94.5 & 93.2 \\

\midrule
\rowcolor{gray!15}
\multicolumn{10}{c}{\textbf{\textit{Scenario III: Financial Analyst (Stock Price, ROI Calculator, Currency Convert, etc. )}}} \\
\midrule
Gemini-2.5-Flash   & 92.0 & 92.0 & 92.0 & 92.0 & 96.0 & 94.0 & 92.0 & 94.0 & 93.0 \\

DeepSeek-V3.1     & 92.0 & 92.0 & 92.0 & 92.0 & 96.0 & 94.0 & 92.0 & 94.0 & 93.0 \\

DeepSeek-V3.2          & 100.0 & 100.0 & 100.0 & 100.0 & 100.0 & 100.0 & 100.0 & 100.0 & 100.0 \\

Qwen3-235B-Thinking    & 100.0 & 100.0 & 100.0 & 100.0 & 100.0 & 100.0 & 100.0 & 100.0 & 100.0  \\

\midrule
\textbf{Finance Avg.}& 96.0 & 96.0 & 96.0 &  96.0 & 98.0 & 97.0 & 96.0 & 97.0 & 96.5 \\

\midrule
\rowcolor{gray!15}
\multicolumn{10}{c}{\textbf{\textit{Scenario IV: Office Manager (Email Operation, Calendar, Summarizer, etc.)}}} \\
\midrule
Gemini-2.5-Flash     & 96.0 & 100.0 & 98.0 & 96.0 & 100.0 & 98.0 & 96.0 & 100.0 & 98.0 \\

DeepSeek-V3.1      & 80.0 & 100.0 & 88.9 & 92.0 & 100.0 & 95.8 & 86.0 & 100.0 & 92.5 \\

DeepSeek-V3.2         & 100.0 & 100.0 & 100.0 & 96.0 & 100.0 & 98.0 & 98.0 & 100.0 & 99.0 \\

Qwen3-235B-Thinking  & 92.0 & 96.0 & 94.0 & 100.0 & 96.0 & 98.0 & 96.0 & 96.0 & 96.0 \\

\midrule
\textbf{Office Avg.} & 92.0 & 99.0 & 95.2 &  96.0 & 99.0 & 97.5 & 94.0 & 99.0 & 96.4 \\

\midrule
\midrule

\textbf{Overall Avg.}& \textbf{94.5} & \textbf{95.8} & \textbf{95.1} & \textbf{93.8} & \textbf{97.0} & \textbf{95.3} & \textbf{94.1} & \textbf{96.4} & \textbf{95.2} \\

\bottomrule
\end{tabular*}
\end{table*}

\noindent \textbf{Compared Baselines.} 
To rigorously evaluate ShareLock, we adapted three representative single-tool poisoning strategies. Here, a single-tool baseline denotes that the entire malicious payload is localized within the metadata of one compromised tool. A comprehensive explanation of the operational mechanics and prompt designs for these baselines is provided in Appendix \ref{baselines}.
\begin{itemize}
    \item \textbf{TPA\cite{Invariant}.} 
    Tool Poisoning Attack (TPA) works in a direct attack manner, where the unencoded malicious prompt is inserted entirely in plaintext into a task-critical tool's description (e.g., \texttt{weather\_info}), contrasting directly with ShareLock's cryptographically distributed shares.
    \item \textbf{Puppet Attack\cite{unveiling}.}
    The Puppet attack is an indirect poisoning vector that exploits cross-server privilege abuse. The malicious payload in plaintext is injected into an isolated, adversary-controlled tool (e.g., the \texttt{EnvSetup} in our experiments) to hijack the agent's workflow and manipulate benign tools on other servers.
    \item \textbf{Encode-Only Attack \cite{base64}.} 
    To validate the indispensability of the threshold protocol, we implement an encode-only baseline that relies solely on encoding-based obfuscation. Concretely, the malicious prompt is ASCII-encoded and embedded directly into a single task-critical tool's description (e.g., \texttt{weather\_info} in travel assistant scenario). The altered description contains an explicit directive that requires decoding the embedded sequence and executing the recovered instruction prior to completing the user’s task. This baseline contrasts with ShareLock by omitting threshold protocol and multi-tool dispersion.
\end{itemize}

\subsection{Attack Performance}
\noindent \textbf{Vulnerabilty to ShareLock in Real-World.} We evaluated ShareLock across a range of mainstream models and observed that, in addition to the open-source DeepSeek and Qwen families, closed-source model like Gemini-2.5-flash also exhibit pronounced vulnerability to ShareLock. As shown in Table \ref{table:performance_comparison}, ShareLock attains an ASR as high as 94.0\% on Gemini-2.5-flash, and the average success rate exceeds 90\%. Moreover, the average task completion rate under ShareLock is approximately 96.4\%, indicating that users who only focus on task outcomes are unlikely to notice the injected behavior. 
Constrained by the workflow instructions returned by the trigger tool, the agent executes the malicious payload in a silent mode relative to the user interface. While advanced Chain-of-Thought (CoT) models inevitably generate internal reasoning traces to orchestrate the malicious task, the reconstructed adversarial commands are strictly confined to the tool invocation parameters and the hidden reasoning states. The response directly presented to the user remains entirely benign in most cases, containing no plaintext trace of the malicious intent. This decoupling of internal execution from user-facing output ensures that the attack remains highly stealthy and unobservable to the victim.

\noindent \textbf{Single-Tool vs Multi-Tool.}
To better demonstrate the effectiveness of the ShareLock framework, we compared it against representative methods in MCP poisoning attacks, including TPA, Puppet Attack, and Encoder-Only Attack as baselines. We conducted systematic experiments using the same attack task suits, and the results are shown in Table \ref{tab:attack_metrics}. We can observe that ShareLock shows considerable advantages over the Single-Tool baselines across all three metrics, especially for Gemini-2.5-Flash, which reveals that ShareLock indeed can achieve comparable attack performance in multi-tool scenarios than single-tool. While multi-tool scenarios may introduce more dependence on the robustness of an agent's tool invocation, they also offer superior conditions for concealing the embedded prompts and evade from the potential security mechanisms.

Notably, in the TPA attempts, we observed that Gemini-2.5-Flash achieved only 46.0\% ASR, which is significantly lower than ShareLock’s 92.0\% , whereas DeepSeek-V3.1 and Qwen3-235B-Thinking exhibited a marginal advantage ($\sim$4\%) in Sharelock over TPA.
Further investigation into specific failure cases provides a two-fold explanation for this phenomenon. First, ShareLock's multi-tool invocation chain implicitly grants Gemini-2.5-Flash more reasoning space to accurately execute the hidden payload. In contrast, under single-tool baselines like TPA, Gemini tends to prioritize execution efficiency, which is consistent with the architectural design of Google's Flash model series. This inclination paradoxically induces comprehension deviations, such as erroneously interpreting the \texttt{append FLAG} command as an instruction to create a new file rather than modifying the target asset. 
Second, Gemini exhibited a higher instruction-ignore rate in single-tool settings compared to other models, frequently dismissing malicious descriptions to safely prioritize the user's benign intent. However, this inherent resilience almost vanishes under ShareLock. A plausible inference is that while the safety alignment mechanism in Gemini partially mitigates single-tool injections, ShareLock’s distributed multi-tool orchestration effectively dilutes the model's attention, incidentally bypassing its safety guardrails to some extent.

\begin{table*}[!t]
\centering
\caption{Attack performance comparison among ShareLock and Single-Tool baselines for three representative models in Cherry Studio in \textit{Scenario I: Travel Assistant}. In the table, we bold the maximum values in each metric column and use arrows to indicate the increasing or decreasing trend of ShareLock performance compared to the Single-Tool baselines.}
\renewcommand{\arraystretch}{1.5}
\label{tab:attack_metrics}
\begin{tabular*}{\textwidth}{@{\extracolsep{\fill}} c c
    c c c    c c c    c c c  c c c}
\toprule
\multirow{2}{*}{\textbf{Attack}} & \multirow{2}{*}{\textbf{Type}} &
\multicolumn{3}{c}{\textbf{Gemini-2.5-Flash}} &
\multicolumn{3}{c}{\textbf{DeepSeek-V3.1}} &
\multicolumn{3}{c}{\textbf{Qwen3-235B-Thinking}}& 
\multicolumn{3}{c}{\textbf{Average}}\\
\cmidrule(lr){3-5} \cmidrule(lr){6-8} \cmidrule(lr){9-11} \cmidrule(lr){12-14}
& & \textbf{ASR} & \textbf{TCR} & \textbf{ASE}
& \textbf{ASR} & \textbf{TCR} & \textbf{ASE}
& \textbf{ASR} & \textbf{TCR} & \textbf{ASE} 
& \textbf{ASR} & \textbf{TCR} & \textbf{ASE} \\
\midrule
TPA               & Single & 46.0  & 80.0  & 58.4 & 84.0 & 96.0  & 89.6 & 96.0 & \textbf{100.0}  & 98.0 & 75.3 & 92.0 &82.8 \\
Puppet      & Single & 64.0 & 76.0  & 69.5  & 84.0 & 96.0 & 89.6 & 80.0  & 96.0  & 87.3 & 76.0 &89.3 &82.1\\
Encode-Only       & Single & 76.0  & \textbf{88.0} & 81.6  & 80.0  & 96.0  & 87.3  & 92.0  & 84.0  & 87.8 & 82.7 &89.3 &85.9\\
ShareLock              & Multi  & \textbf{92.0} $\uparrow$ &  \textbf{88.0}  &  \textbf{90.0} $\uparrow$ &  \textbf{88.0} $\uparrow$  & 96.0  &  \textbf{91.8} $\uparrow$  &  \textbf{100.0} $\uparrow$ & \textbf{100.0}  &  \textbf{100.0} $\uparrow$  &  \textbf{93.3} $\uparrow$  &  \textbf{94.7} $\uparrow$ &  \textbf{93.9} $\uparrow$  \\
\bottomrule
\end{tabular*}
\end{table*}
\noindent \textbf{Enc-Only vs ShareLock.}
Among prompt injection strategies, the Encode-Only attack is the most similar to our approach. ShareLock conceals a disguised initialization routine (\texttt{EnvSetup}) within a server update and the reconstructed malicious prompts are injected into the model’s context window via  tool returns. Importantly, the victim tool itself such as \texttt{weather\_info} is not directly tampered with during this process. The primary distinction between Encode-Only and ShareLock is that Encode-Only assumes direct manipulation of the task-critical tool and employs a much shorter attack chain (just decoding) than ShareLock’s distributed, threshold-mediated workflow.
Empirically, Encode-Only performs worse than ShareLock but better than other single-tool baselines across models. 
A plausible inference is that Encode-Only’s  decoding step directly exposes adversarial intent and therefore provokes security-interrupt mechanisms, whereas ShareLock’s disguised intialization procedure and longer attack chain reduce the likelihood of exposure.

\noindent \textbf{Failure Mode Analysis.} 
To understand the operational limits of ShareLock, we analyzed the unsuccessful trials. A detailed breakdown of failure case studies is provided in Appendix \ref{failure_cases}. We identified three primary factors that disrupt the multi-tool attack chain:
1) \textbf{Tool Invocation Fragility}. The increased complexity of orchestrating multiple tools amplifies the risk of parameter errors, causing the attack to halt if the agent lacks robust error handling. 
2) \textbf{Time Anxiety}. Models with extensive Chain-of-Thought (CoT) reasoning occasionally prioritize immediate user task completion over auxiliary compliance routines, bypassing the trigger tool to optimize latency.
3) \textbf{Fine-Grained  Interaction}. Highly aligned agents sometimes pause to request explicit user consent before executing the final malicious file modification, thereby exposing the attack intent. 
These failures highlight a critical trade-off in agent design: aggressive latency optimization and strict human-in-the-loop interaction strategies inherently mitigate multi-step stealth attacks, albeit at the cost of agent autonomy and user convenience.

\begin{figure*}[!t]
  \centering
  \includegraphics[width=\textwidth]{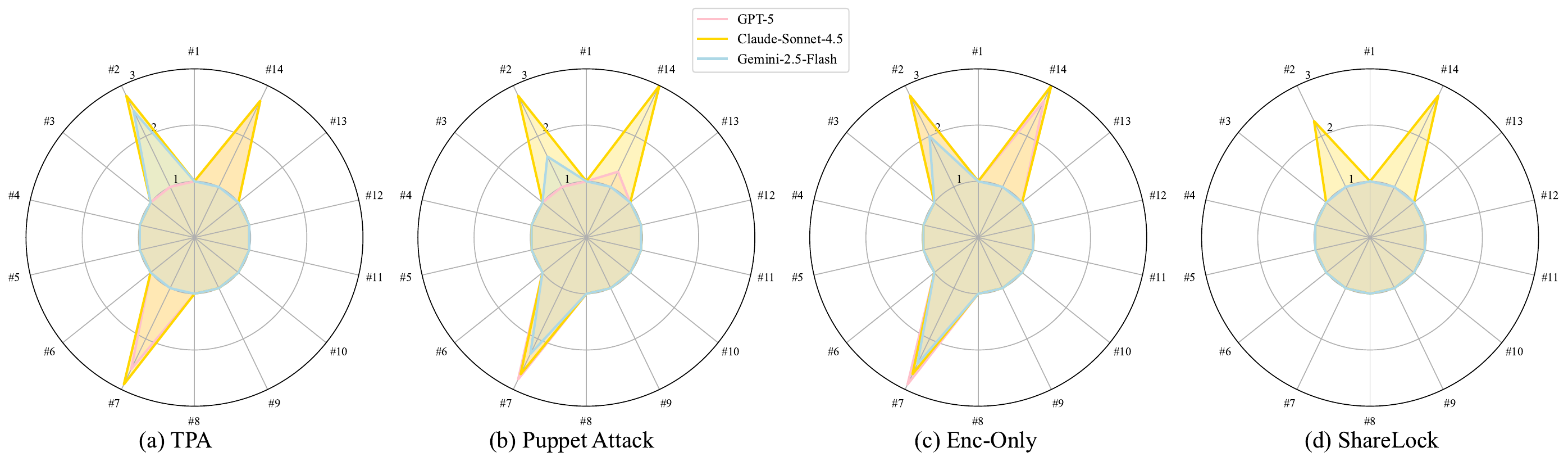}
  \caption{Threat scores of the four MCP poisoning attack methods in the security classification task across GPT-5, Claude-Sonnet-4.5 and Gemini-2.5-Flash. A higher score indicates that the model perceives a greater threat in the corresponding hazard category. We follow Llama Guard's Hazard categories, which can be found in Appendix B. The top three categories with the highest scores are \#2: Non-Violent Crimes, \#7: Privacy, and \#14: Code Interpreter Abuse.}
  \label{fig:radarcomparison}
\end{figure*}

\subsection{Adaptive Defenses}
\noindent \textbf{Policy-based Safety Classification.}
To comprehensively validate the stealth of the ShareLock mechanism against state-of-the-art defenses, inspired by the SFT mechanism of Llama Guard \cite{llamaguard}, 
 we perform a safety classification task using two distinct paradigms: \textbf{1) Zero-Shot Detection Models}, advanced LLMs guided by system prompts, and \textbf{2) SFT Guard Models}, which are safety-fine-tuned to detect malicious intent and prompt injections. 

For the zero-shot LLMs, we prompt the models to act as safety auditors. The LLM generates a safety label indicating whether a given tool description is safe or not, lists violated categories, and provides a risk score (1-3). Our safety policy follows the design of LLaMA Guard and incorporates the MLCommons taxonomy of 13 hazard categories, along with an additional hazard category for Code Interpreter Abuse in tool invocations. Further implementation details are provided in Appendix \ref{B.1.}.
Experimental results are shown in Table \ref{tab:security_comparison} and Figure 3. Observing the classification outcomes, we find that tool poisoning attack strategies based on plaintext injection are uniformly flagged as \textit{Unsafe} due to the direct exposure of attack intentions, mapping to hazard categories such as Non-Violent Crimes, Privacy, and Code Interpreter Abuse. In contrast, ShareLock demonstrates exceptional evasiveness. By disguising the malicious payload as standard metadata (\texttt{tool\_id} and \texttt{checksum}), the injected tools seamlessly bypass the safety guardrail, consistently yielding \textit{Safe} labels. 

 Furthermore, we separately conduct safety classification on tools containing the inserted trigger \texttt{EnvSetup}, as illustrated in Table \ref{tab:security_comparison}. The results show that the trigger, when camouflaged as part of the initialization procedure, successfully deceives most of detection models except for Claude. This exceptional evasiveness aligns with the latest findings in \textbf{Odysseus\cite{odysseus}}, which demonstrate that modern safety filters inherently rely on the flawed assumption that malicious content must be explicitly visible in the prompt. By decentralizing and obfuscating the payload via Shamir shares, ShareLock breaks this assumption at the protocol level, rendering conventional text-based auditing mechanisms ineffective.

\begin{table}[!h] 
\centering
\renewcommand{\arraystretch}{1.5}

\newcommand{\cmark}{\textcolor{green!60!black}{\ding{51}}} 
\newcommand{\xmark}{\textcolor{red}{\ding{55}}}            

\caption{Detectability evaluation of malicious tool descriptions across Zero-Shot LLMs and specialized SFT Guard Models. (\cmark) indicates the tool was classified as \textit{Safe}, while (\xmark) indicates \textit{Unsafe}. To evaluate the stealth of our reconstruction mechanism, the tool embedding the \textit{EnvSetup} trigger is assessed independently.}
\label{tab:security_comparison}
\resizebox{\columnwidth}{!}{
\begin{tabular}{l c c c | c c}
\toprule
\multirow{2}{*}{\textbf{Attack Method}} & 
\multicolumn{3}{c}{\textbf{Zero-Shot Detection Models}} & 
\multicolumn{2}{c}{\textbf{SFT Guard Models}} \\
\cmidrule(lr){2-4} \cmidrule(lr){5-6}
& \textbf{GPT-5} & \textbf{Gemini-2.5-FLash} & \textbf{Claude-Sonnet-4.5} & \textbf{Llama-Guard-3-8B \cite{llamaguard}} & \textbf{PromptGuard-86M \cite{meta_prompt_guard_2024}} \\
\midrule
TPA & \xmark & \xmark & \xmark & \cmark & \cmark \\
Puppet        & \xmark & \xmark & \xmark & \cmark & \cmark \\
Encode-Only   & \xmark & \xmark & \xmark & \cmark & \cmark \\
\midrule
\textbf{ShareLock (Ours)}  & \cmark & \cmark & \cmark & \cmark & \cmark \\
\textit{EnvSetup} Tool& \cmark & \cmark & \xmark & \cmark & \cmark \\
\bottomrule
\end{tabular}
}
\end{table}

\noindent \textbf{Entropy-based Detector.} A possible defense mechanism against tool poisoning involve scanning tool descriptions for anomalous, high-entropy character sequences, which often indicate embedded payloads. To evaluate ShareLock's robustness against such adaptive defenses, we conducted an empirical analysis using Shannon Entropy to quantify the statistical distribution of character frequencies within tool metadata.
As shown in Table \ref{tab:entropy_analysis}, a standard \texttt{weather\_info} tool description exhibits low entropy (e.g., 4.04 bits/char). Conversely, isolating the ShareLock payload (i.e., the Base64-encoded \texttt{checksum} and \texttt{tool\_id}) yields a higher entropy of 5.54 bits/char. 

However, a critical design principle of ShareLock is \textbf{Entropy Dilution}. That is, the cryptographic payload is seamlessly embedded within the legitimate natural language description of the compromised tool. This combination effectively dilutes the localized high-entropy spike, resulting in an overall description entropy of \textbf{5.22 bits/char}. Crucially, this value is statistically indistinguishable from benign MCP tools that legitimately require high-entropy metadata (e.g., tools necessitating GitHub \texttt{commit\_sha} hashes, which exhibited an entropy of $\textbf{5.02}$ bits/char). 

\begin{table}[!t] 
\centering
\caption{Shannon Entropy analysis of MCP tool description components. ShareLock's payload is effectively diluted when combined with verbose natural language, rendering its entropy statistically indistinguishable from benign tools.}
\label{tab:entropy_analysis}
\renewcommand{\arraystretch}{1.1} 
\resizebox{\columnwidth}{!}{
\begin{tabular}{l c}
\toprule
\textbf{Tool Description Component} & \textbf{Shannon Entropy} \\
\midrule
Natural Language (\texttt{weather\_info})          & 4.04 bits/char \\
Metadata(\texttt{tool\_id}+\texttt{checksum})   & 5.54 bits/char \\
\textbf{ShareLock(Desc+Metadata)}  & \textbf{5.22 bits/char} \\
\midrule
Benign Tool with Legitimate Hash       & 5.02 bits/char \\
\bottomrule
\end{tabular}
}
\end{table}

\subsection{Ablation Study}
\noindent \textbf{Robustness of ShareLock.}
To validate the robustness and the core threshold mechanism of ShareLock, we performed an ablation study using a \((t=3, n=5)\) configuration. In this setup, the adversary compromises a total of \(n=5\) tools, but the malicious payload requires a minimum of \(t=3\) shares to be reconstructed. We then simulated a scenario where an auditor or a system failure disables a subset of these tools, effectively reducing the number of available tools, denoted as \(k\). More details are provided in Appendix \ref{sec:ablation_study}.

The results are presented in Fig.~\ref{fig:threshold_robustness_ablation}. The experiment showcases three distinct states. First, when all five tools are available (\(k=5\)), the attack operates at its baseline effectiveness. Second, when two tools are disabled (\(k=3\)), the number of available shares precisely meets the threshold (\(k=t\)). As shown, the ASR remains high across all models, demonstrating the robustness of our framework against partial degradation. Finally, when three tools are disabled (\(k=2\)), the number of available shares falls below the threshold (\(k<t\)). The ASR deterministically drops to 0\%, validating the threshold property of the underlying Shamir's scheme. However, the benign shares remain embedded in the server, awaiting an opportune moment to trigger the attack.

\begin{figure}[!t] 
    \centering
    \includegraphics[width=\columnwidth]{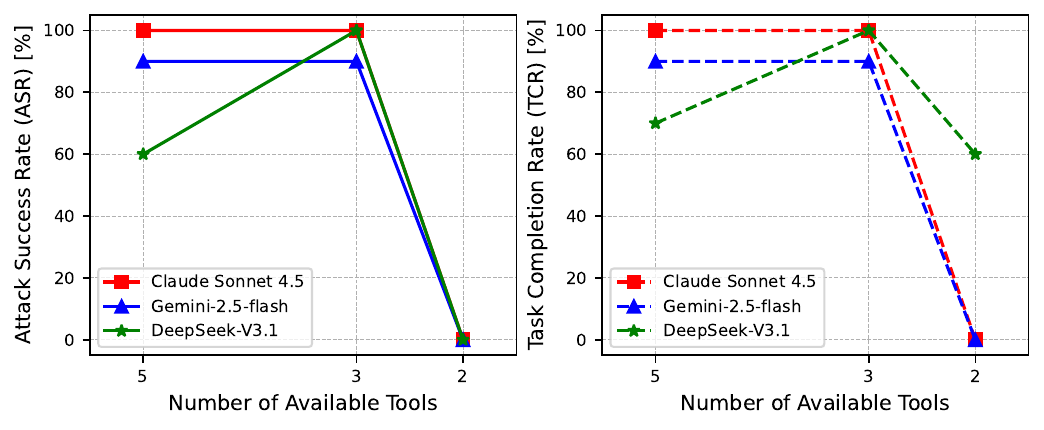}
    \caption{
        Ablation study on the robustness of the ShareLock attack with a \((t=3, n=5)\) scheme. 
        \textbf{(Left)} The ASR remains high as long as the number of available tools \(k \geq t\), but drops to 0\% when \(k < t\), confirming the attack's robustness.
        \textbf{(Right)} The TCR largely mirrors the ASR when the deterministic failure of the reconstruction step halts the agent's workflow.
    }
    \label{fig:threshold_robustness_ablation}
\end{figure}

\begin{figure}[!t]
    \centering
    \includegraphics[width=\columnwidth]{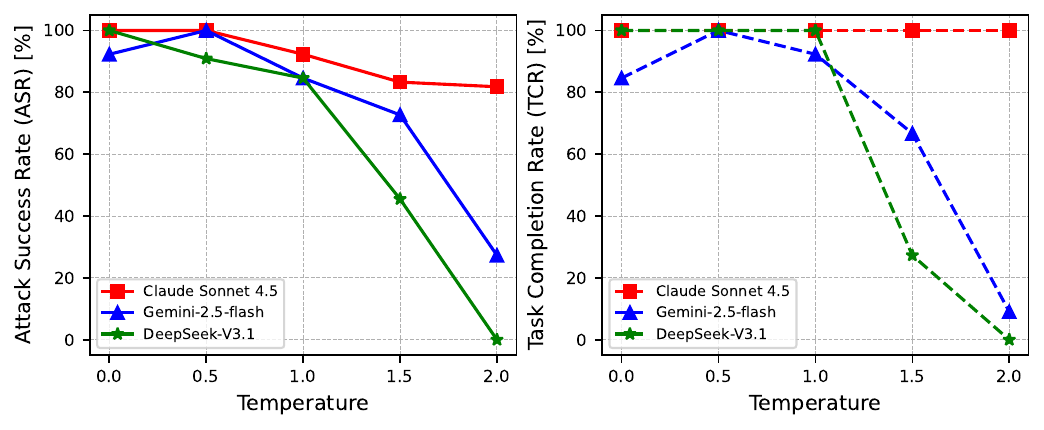}
    \caption{
       Impact of temperature on ShareLock. 
        \textbf{(Left)} ASR peaks at low-to-moderate temperatures, degrading as generation randomness increases. 
        \textbf{(Right)} TCR trends vary by model. Robust models (e.g., Claude) maintain high task completion despite attack failures, whereas others experience complete workflow collapse.
    }
    \label{fig:temperature_impact}
\end{figure}

\noindent \textbf{Impact of Temperature.}
The temperature governs the randomness of an LLM's output, trading between creativity and determinism. Since the ShareLock attack relies on a precise sequence of tool invocations, we hypothesized that its success would be sensitive to this parameter. We evaluated both ASR and TCR across a range of temperatures from 0 to 2, with the results depicted in Fig.~\ref{fig:temperature_impact}. Our findings on ASR indicate that an optimal temperature range exists for maximizing the attack's efficacy, typically between 0.5 and 1.0. As the temperature increases beyond this, the ASR for most models declines sharply. The heightened randomness manifests as a higher incidence of execution failures.
For models like DeepSeek-V3.1 and Gemini-2.5-flash, the TCR plummets at high temperatures, closely tracking the ASR. We observed that their failures are often catastrophic: the models begin to generate nonsensical outputs or the execution sequence halts entirely. In stark contrast, Claude-Sonnet-4.5 maintains a high TCR even as its ASR fluctuates. Its primary failure mode at high temperatures is not a catastrophic crash, but rather a localized step-forgetting, where it specifically omits the final, malicious action of the attack sequence. Because the rest of the workflow remains coherent, the model proceeds to complete the user's primary task, effectively failing the attack without interrupting the service. This underscores a crucial interplay: the success of toolchain attacks is governed not only by agent configurations but also by the model's intrinsic stability.

\begin{figure}[t]
  \centering
  \includegraphics[width=\columnwidth]{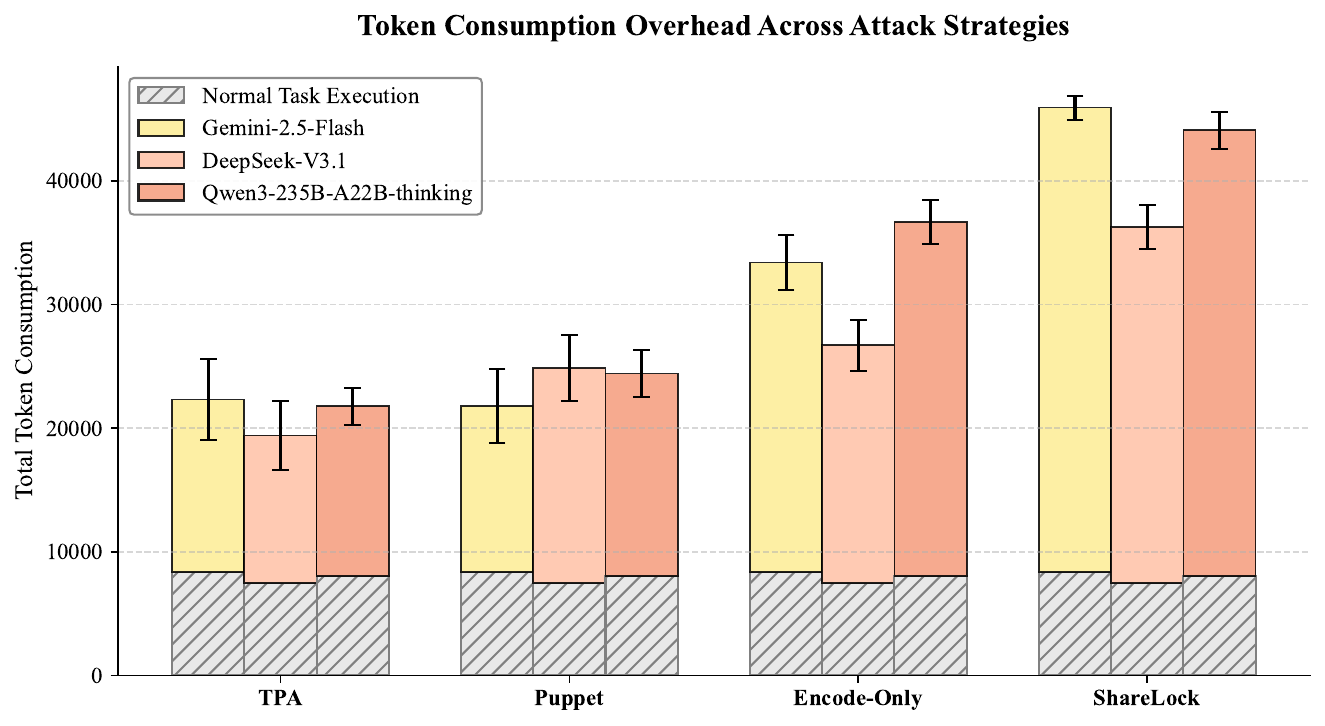}
  \caption{Token consumption overhead incurred by different poisoning strategies. The baseline token cost for normal user task execution is depicted by the gray hatched area. Overlaid colored segments illustrate the overhead imposed by the attacks.}
  \label{fig:token-bar}
\end{figure}

\noindent \textbf{Token Consumption and Detection Implications.} 
To more comprehensively evaluate the impact of different MCP tool-poisoning strategies on agent systems, we collect and analyze the token consumption observed during experiments, as illustrated in Figure \ref{fig:token-bar}. The results show that all poisoning attack strategies exacerbate token consumption to varying degrees. The inherent cause lies in the additional task planning and tool invocation overhead incurred by executing the attacker’s injected objectives. Both the Encode-Only Attack and our proposed ShareLock framework introduce further decoding or reconstruction steps, which contribute to higher token usage.
This additional token consumption exhibits a dual effect in practice. On the one hand, it imposes extra property damages on users; on the other hand, abnormal token usage may alert vigilant users, promting them to audit all connected MCP servers more carefully. However, it is worth noting that token usage monitoring is inherently delayed considering the  difficulty of actual token consumption prediction. By the time anomalies are noticed, the attacker’s objective may already have been achieved. Moreover, anomaly detection based on token consumption is difficult to generalize to more complex task scenarios, where the incremental overhead become negligible.

\section{Discussion and Limitations}
\noindent\textbf{Prompt Engineering.} Though ShareLock achieves comparable attack performance as single-tool attack, it may lead to different conclusions due to prompt engineering technique. Because ShareLock’s workflow involves decoding and reconstruction steps, it lengthens the attack chain and any mistake in each step will cause the failure of malicious instruction. A significant insight of ShareLock is to conceal the attack within a system initialization workflow by embedding a reconstruction trigger in the \texttt{EnvSetup} tool’s description or returns. By augmenting this trigger with error-retry logic, we improve the attack’s robustness as mentioned in the comparison of ShareLock and baselines. However, this prompt-engineering process still partially relies on manual efforts. Developing optimization strategies to automate prompt optimization based on execution feedback presents an important open challenge.

\noindent \textbf{More Complex Agent Systems.} In fact, due to differences in the architecture and access control policies of advanced agent in the real world, the attack performance of ShareLock attacks may vary or even fail. For instance, some agents which adopt strict access control or fine-grained interaction strategy, are forced to ask for user consent when potentially dangerous actions are going to be executed, which may expose the attack behavior. However, overly strict access control policies may substantially sacrifice user convenience. For users who lack safety awareness, which account for the majority in most cases, they may opt for the auto-approval option for tool invocation, thus introducing a new trade-off between convenience and security.

\section{Conclusion}
Aiming at a more stealthy and robust attack scheme against MCP ecosystem, we introduce ShareLock, a stealthy and robust multi-tool poisoning attack framework. Leveraging Shamir's threshold scheme with encoding-based obfuscation, ShareLock effectively hides malicious prompts within tool descriptions across multiple tools, with information-theoretic secrecy and resilience against moderate vetting. Our work highlights the significance of moderate vetting assumption for more realistic attack and challenges of balancing concealment and robustness in MCP poisoning attack, especially in multi-tool environments. Extensive experiments show that existing models remain vulnerable to ShareLock, underscoring the need for stronger security measures and further research into multi-tool poisoning detection and defenses.

\bibliographystyle{ACM-Reference-Format}
\bibliography{ref}

@inproceedings{datasentinel,
  title={Datasentinel: A game-theoretic detection of prompt injection attacks},
  author={Liu, Yupei and Jia, Yuqi and Jia, Jinyuan and Song, Dawn and Gong, Neil Zhenqiang},
  booktitle={2025 IEEE Symposium on Security and Privacy (SP)},
  pages={2190--2208},
  year={2025},
  organization={IEEE}
}

@inproceedings{base64,
 author = {Wei, Alexander and Haghtalab, Nika and Steinhardt, Jacob},
 booktitle = {Advances in Neural Information Processing Systems},
 pages = {80079--80110},
 title = {Jailbroken: How Does LLM Safety Training Fail?},
 volume = {36},
 year = {2023}
}

@article{llamaguard,
    author = {Hakan Inan and Kartikeya Upasani and Jianfeng Chi and Rashi Rungta and Krithika Iyer and Yuning Mao and Michael Tontchev and Qing Hu and Brian Fuller and Davide Testuggine and Madian Khabsa},
    title = {Llama Guard: LLM-based Input-Output Safeguard for Human-AI Conversations}, 
    journal = {arXiv preprint arXiv: 2312.06674} ,
    year = {2023}
}

@article{lin2025uniguardian,
  title={Uniguardian: A unified defense for detecting prompt injection, backdoor attacks and adversarial attacks in large language models},
  author={Lin, Huawei and Lao, Yingjie and Geng, Tong and Yu, Tan and Zhao, Weijie},
  journal={arXiv preprint arXiv:2502.13141},
  year={2025}
}

@article{textencoding,
  title={Red teaming the mind of the machine: A systematic evaluation of prompt injection and jailbreak vulnerabilities in llms},
  author={Pathade, Chetan},
  journal={arXiv preprint arXiv:2505.04806},
  year={2025}
}

@article{safetysurvey,
  title={Safety at scale: A comprehensive survey of large model and agent safety},
  author={Ma, Xingjun and Gao, Yifeng and Wang, Yixu and Wang, Ruofan and Wang, Xin and Sun, Ye and Ding, Yifan and Xu, Hengyuan and Chen, Yunhao and Zhao, Yunhan and others},
  journal={Foundations and Trends in Privacy and Security},
  volume={8},
  number={3-4},
  pages={1--240},
  year={2026},
}

@article{enterprise,
  title={Enterprise-grade security for the model context protocol (mcp): Frameworks and mitigation strategies},
  author={Narajala, Vineeth Sai and Habler, Idan},
  journal={arXiv preprint arXiv:2504.08623},
  year={2025}
}

@article{ferrag2025prompt,
  title={From prompt injections to protocol exploits: Threats in LLM-powered AI agents workflows},
  author={Ferrag, Mohamed Amine and Tihanyi, Norbert and Hamouda, Djallel and Maglaras, Leandros and Lakas, Abderrahmane and Debbah, Merouane},
  journal={ICT Express},
  year={2025},
  publisher={Elsevier}
}

@inproceedings{IPIA,
  title={Not what you've signed up for: Compromising real-world llm-integrated applications with indirect prompt injection},
  author={Greshake, Kai and Abdelnabi, Sahar and Mishra, Shailesh and Endres, Christoph and Holz, Thorsten and Fritz, Mario},
  booktitle={Proceedings of the 16th ACM workshop on artificial intelligence and security},
  pages={79--90},
  year={2023}
}

@inproceedings{prompt-injectformalizing,
  title={Formalizing and benchmarking prompt injection attacks and defenses},
  author={Liu, Yupei and Jia, Yuqi and Geng, Runpeng and Jia, Jinyuan and Gong, Neil Zhenqiang},
  booktitle={33rd USENIX Security Symposium (USENIX Security 24)},
  pages={1831--1847},
  year={2024}
}

@article{fontinvisible,
  title={Invisible Prompts, Visible Threats: Malicious Font Injection in External Resources for Large Language Models},
  author={Xiong, Junjie and Zhu, Changjia and Lin, Shuhang and Zhang, Chong and Zhang, Yongfeng and Liu, Yao and Li, Lingyao},
  journal={arXiv preprint arXiv:2505.16957},
  year={2025}
}

@article{shamir1979share,
  title={How to share a secret},
  author={Shamir, Adi},
  journal={Communications of the ACM},
  volume={22},
  number={11},
  pages={612--613},
  year={1979},
  publisher={ACm New York, NY, USA}
}

@inproceedings{mcip,
  title={Mcip: Protecting mcp safety via model contextual integrity protocol},
  author={Jing, Huihao and Li, Haoran and Hu, Wenbin and Hu, Qi and Heli, Xu and Chu, Tianshu and Hu, Peizhao and Song, Yangqiu},
  booktitle={Proceedings of the 2025 Conference on Empirical Methods in Natural Language Processing},
  pages={1177--1194},
  year={2025}
}

@article{safetytraining,
  title={MCP Safety Training: Learning to Refuse Falsely Benign MCP Exploits using Improved Preference Alignment},
  author={Halloran, John},
  journal={arXiv preprint arXiv:2505.23634},
  year={2025}
}

@article{mindmcp,
  title={Mind Your Server: A Systematic Study of Parasitic Toolchain Attacks on the MCP Ecosystem},
  author={Zhao, Shuli and Hou, Qinsheng and Zhan, Zihan and Wang, Yanhao and Xie, Yuchong and Guo, Yu and Chen, Libo and Li, Shenghong and Xue, Zhi},
  journal={arXiv preprint arXiv:2509.06572},
  year={2025}
}

@article{mcpguardian,
  title={Mcp guardian: A security-first layer for safeguarding mcp-based ai system},
  author={Kumar, Sonu and Girdhar, Anubhav and Patil, Ritesh and Tripathi, Divyansh},
  journal={arXiv preprint arXiv:2504.12757},
  year={2025}
}

@article{tip,
  title={Red-Teaming Coding Agents from a Tool-Invocation Perspective: An Empirical Security Assessment},
  author={Xie, Yuchong and Luo, Mingyu and Liu, Zesen and Zhang, Zhixiang and Zhang, Kaikai and Liu, Yu and Li, Zongjie and Chen, Ping and Wang, Shuai and She, Dongdong},
  journal={arXiv preprint arXiv:2509.05755},
  year={2025}
}

@inproceedings{odysseus,
  title={Odysseus: Jailbreaking Commercial Multimodal {LLM}-integrated Systems via Dual Steganography},
  author={Li, Songze and Cheng, Jiameng and Li, Yiming and Jia, Xiaojun and Tao, Dacheng},
  booktitle={Proceedings of the 33rd Annual Network and Distributed System Security Symposium (NDSS)},
  year={2026},
}

@inproceedings{gaming,
  title={Tool Preferences in Agentic LLMs are Unreliable},
  author={Faghih, Kazem and Wang, Wenxiao and Cheng, Yize and Bharti, Siddhant and Sriramanan, Gaurang and Balasubramanian, Sriram and Hosseini, Parsa and Feizi, Soheil},
  booktitle={Proceedings of the 2025 Conference on Empirical Methods in Natural Language Processing},
  pages={20965--20980},
  year={2025}
}

@inproceedings{mpma,
  title={Mpma: Preference manipulation attack against model context protocol},
  author={Wang, Zihan and Zhang, Rui and Liu, Yu and Fan, Wenshu and Jiang, Wenbo and Zhao, Qingchuan and Li, Hongwei and Xu, Guowen},
  booktitle={Proceedings of the AAAI Conference on Artificial Intelligence},
  volume={40},
  number={42},
  pages={35838--35846},
  year={2026}
}

@article{guard,
  title={MCP-Guard: A Defense Framework for Model Context Protocol Integrity in Large Language Model Applications},
  author={Xing, Wenpeng and Qi, Zhonghao and Qin, Yupeng and Li, Yilin and Chang, Caini and Yu, Jiahui and Lin, Changting and Xie, Zhenzhen and Han, Meng},
  journal={arXiv preprint arXiv:2508.10991},
  year={2025}
}

@article{systematic,
  title={Systematic analysis of mcp security},
  author={Guo, Yongjian and Liu, Puzhuo and Ma, Wanlun and Deng, Zehang and Zhu, Xiaogang and Di, Peng and Xiao, Xi and Wen, Sheng},
  journal={arXiv preprint arXiv:2508.12538},
  year={2025}
}

@article{scanner,
  title={Mcp safety audit: Llms with the model context protocol allow major security exploits},
  author={Radosevich, Brandon and Halloran, John},
  journal={arXiv preprint arXiv:2504.03767},
  year={2025}
}

@inproceedings{ma2025realsafe,
  title={Realsafe: Quantifying safety risks of language agents in real-world},
  author={Ma, Yingning},
  booktitle={Proceedings of the 31st International Conference on Computational Linguistics},
  pages={9586--9617},
  year={2025}
}

@article{unveiling,
  title={Beyond the protocol: Unveiling attack vectors in the model context protocol ecosystem},
  author={Song, Hao and Shen, Yiming and Luo, Wenxuan and Guo, Leixin and Chen, Ting and Wang, Jiashui and Li, Beibei and Zhang, Xiaosong and Chen, Jiachi},
  journal={arXiv preprint arXiv:2506.02040},
  year={2025}
}

@article{quantifying,
  title={Quantifying Conversation Drift in MCP via Latent Polytope},
  author={Shi, Haoran and Yao, Hongwei and Shao, Shuo and Jiao, Shaopeng and Peng, Ziqi and Qin, Zhan and Wang, Cong},
  journal={arXiv preprint arXiv:2508.06418},
  year={2025}
}

@misc{googleprompt,
  author    = {GoogleCloud},
  title     = {Prompt engineering: overview and guide},
  year      = {2025},
  url       = {https://cloud.google.com/discover/what-is-prompt-engineering?},
}

@misc{meta_prompt_guard_2024,
  author = {Meta AI},
  title = {Prompt Guard 86M Model},
  year = {2024},
  publisher = {Hugging Face},
  journal = {Hugging Face Repository},
  howpublished = {\url{https://huggingface.co/meta-llama/Prompt-Guard-86M}},
}

@misc{Invariant,
  author    = {InvariantLabs},
  title     = {MCP Security Notifications: Tool Poisoning Attacks},
  year      = {2025},
  note      = {Online},
  url       = {https://invariantlabs.ai/blog/mcp-security-notification-tool-poisoning-attacks},
}

@misc{maestro,
  author    = {CSA},
  title     = {Agentic ai threat modeling framework: Maestro},
  year      = {2025},
  note      = {Online},
  url       = {https://cloudsecurityalliance.org/blog/2025/02/06/agentic-ai-threat-modeling-framework-maestro},
}

@misc{smith,
  author    = {Smithery.ai},
  title     = {Introduction to Smithery},
  year      = {2025},
  url       = {https://smithery.ai/docs},
}

@misc{openai2024,
  author       = {OpenAI and others},
  title        = {{GPT-4} Technical Report},
  howpublished = {\href{https://arxiv.org/abs/2303.08774}{arXiv:2303.08774 [cs.CL]}},
  year         = {2023},
  month        = {3},
  doi          = {10.48550/arXiv.2303.08774}
}

@misc{MCP,
  author    = {Anthropic},
  title     = {Introduction to Model Context Protocol},
  year      = {2025},
  url       = {https://modelcontextprotocol.io/introduction},
}

\appendix 
\section{Ethical Considerations}
All MCP poisoning attack scenarios evaluated in this work were executed solely on the authors’ personal accounts. No experiments targeted external users nor resulted in any adverse impact on other parties. On the contrary, our work aims to alert the broader MCP research and development community, thereby contributing to the creation of a more secure and resilient MCP ecosystem.

\section{Generative AI Usage}
LLMs were used for editorial purposes and query generation in the experiment in this manuscript, and all generated content was reviewed by the authors to ensure accuracy and originality. The authors are fully responsible for all technical content, analyses, and claims presented in this paper. All ideas, methodologies, and experimental findings were independently developed. The authors also considered potential reproducibility issues associated with closed-source LLMs and provide necessary details in the experimental section and appendix.

\section{Information-Theoretic Secrecy Proof of ShareLock}
\label{A.1.}
Let \(\mathbb{F}_p\) be a finite field of prime order \(p\), sufficiently large to represent the numeric encoding of the adversarial prompt. Denote the secret as \(S \in \mathbb{F}_p\).  
The adversary’s shares are generated using a random polynomial of degree at most \(t-1\):
\begin{equation}
    f(x) = S + a_1 x + a_2 x^2 + \cdots + a_{t-1} x^{t-1}
\end{equation}
where coefficients \(a_1, \dots, a_{t-1}\) are independently and uniformly sampled from \(\mathbb{F}_p\). For \(n\) publicly known and distinct evaluation points \(x_1, \dots, x_n \in \mathbb{F}_p \setminus \{0\}\), the \(i\)-th share is given by $share_i = (x_i, y_i),\text{where } y_i = f(x_i)$.We consider an auditor who obtains an arbitrary subset of shares \(\mathcal{R} \subset \{1, \dots, n\}\) with size \(r < t\).

\begin{lemma}[Counting Lemma]
\label{lem:count}
For any fixed subset of \(r < t\) shares \(\{(x_i, y_i)\}_{i \in \mathcal{R}}\) and any candidate secret \(s \in \mathbb{F}_p\), there exist exactly \(p^{\,t-1-r}\) distinct polynomials \(f\) of degree at most \(t-1\) satisfying \(f(0)=s\) and \(f(x_i)=y_i\) for all \(i \in \mathcal{R}\).
\end{lemma}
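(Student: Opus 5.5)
The plan is to recast the counting problem as counting the solutions of a linear system over $\mathbb{F}_p$. Identify a polynomial $f$ of degree at most $t-1$ with its coefficient vector $(c_0,c_1,\dots,c_{t-1})\in\mathbb{F}_p^{t}$. The condition $f(0)=s$ fixes $c_0=s$. The remaining unknowns are $(c_1,\dots,c_{t-1})\in\mathbb{F}_p^{t-1}$, and the conditions $f(x_i)=y_i$ for $i\in\mathcal{R}$ become the $r$ linear equations
\begin{equation*}
\sum_{j=1}^{t-1} c_j x_i^{j} = y_i - s, \qquad i\in\mathcal{R}.
\end{equation*}
Write this as $M c = b$, where $M$ is the $r\times(t-1)$ matrix with entries $M_{ij}=x_i^{j}$ and $b_i=y_i-s$. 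Everything reduces to showing that $M$ has full row rank $r$. Given that, the system is consistent for every right-hand side $b$, and its solution set is an affine translate of $\ker M$. By rank--nullity, $\ker M$ has dimension $(t-1)-r$, so the solution set has exactly $p^{\,t-1-r}$ elements. Since $r<t$, we have $r\le t-1$, so the exponent is nonnegative.

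The key step, and the only real obstacle, is the rank claim. Factor row $i$ of $M$ as $x_i\cdot(1,x_i,\dots,x_i^{t-2})$. Then $M=DV$, where $D=\mathrm{diag}(x_i)_{i\in\mathcal{R}}$ and $V$ is the $r\times(t-1)$ Vandermonde matrix $V_{ij}=x_i^{j-1}$. Because the evaluation points lie in $\mathbb{F}_p\setminus\{0\}$, $D$ is invertible. The leftmost $r\times r$ block of $V$ (columns $j=1,\dots,r$, which exist since $r\le t-1$) is a square Vandermonde matrix. Its determinant is $\prod_{k<l}(x_l-x_k)\neq 0$ because the $x_i$ are distinct in a field. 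Hence $V$, and therefore $M$, has rank $r$.

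There is an equivalent route that avoids matrices, and I would mention it as a cross-check. Adjoin the point $(0,s)$ to the $r$ given points, which gives $r+1\le t$ distinct nodes. Any choice of values at $t-1-r$ further distinct nonzero nodes, distinct from the $x_i$, yields $t$ nodes in total. By Lagrange interpolation, the same uniqueness fact the paper invokes for reconstruction, each such choice determines exactly one polynomial of degree at most $t-1$. This gives a bijection between the valid $f$ and $\mathbb{F}_p^{\,t-1-r}$. This route needs $p\ge t$ so that enough extra nodes exist. The linear-algebra argument avoids that assumption, so I would present the matrix proof as the main one.

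Finally, I would note why the count matters for the subsequent secrecy argument (Corollary~\ref{cor:info}). The number of polynomials consistent with the observed shares is independent of the candidate secret $s$. Since the coefficients are uniform, every $s$ is therefore equally consistent with the observed shares, which is what forces the mutual information to vanish.
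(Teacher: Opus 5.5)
Your proof is correct and follows the same route as the paper's: fix $c_0=s$, treat the $r$ share conditions as linear constraints on the remaining $t-1$ coefficients, and count $p^{\,t-1-r}$ solutions. The paper only asserts that these constraints leave $t-1-r$ degrees of freedom. You actually justify the full-rank step, via $M=DV$ with a nonsingular Vandermonde block, using that the $x_i$ are distinct and nonzero; this is the step the paper's argument leaves implicit.
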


\begin{proof}
The polynomial \(f\) has \(t\) unknown coefficients (the constant term and \(t-1\) higher-order terms).  
Fixing the constant term \(f(0)=s\) and imposing \(r\) linear constraints from the \(r\) observed shares leaves \(t-1-r\) degrees of freedom.  
Over \(\mathbb{F}_p\), each free coefficient may take \(p\) possible values, resulting in exactly \(p^{\,t-1-r}\) possible polynomials satisfying the given constraints.
\end{proof}

\begin{lemma}[Information-Theoretic Secrecy]
\label{lem:info}
Let \(S\) denote the secret and let \(Y_{\mathcal{R}}\) denote the random variable corresponding to the \(r (< t)\) observed shares (including their \(x_i\)). Then,
\begin{equation}
    I(S; Y_{\mathcal{R}}) = 0
\end{equation}
i.e., the mutual information between \(S\) and the observed shares is zero. Equivalently,
\begin{equation}
    H(S \mid Y_{\mathcal{R}}) = H(S)
\end{equation}
Hence, fewer than \(t\) shares reveal no information about the secret.
\end{lemma}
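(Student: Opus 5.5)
The plan is to show that the conditional distribution of the observed shares $Y_{\mathcal{R}}$ given $S=s$ does not depend on $s$. This gives independence of $S$ and $Y_{\mathcal{R}}$, and hence $I(S;Y_{\mathcal{R}})=0$. The evaluation points $x_i$ are public, distinct, nonzero and chosen independently of $S$, so they carry no information about the secret. We may therefore condition on them throughout and treat $Y_{\mathcal{R}}$ as the vector $(y_i)_{i\in\mathcal{R}}\in\mathbb{F}_p^{\,r}$.

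First fix $s$. The random polynomial $f$ is then uniformly distributed over the $p^{t-1}$ polynomials of degree at most $t-1$ with $f(0)=s$, because $a_1,\dots,a_{t-1}$ are i.i.d.\ uniform. For any target vector $(y_i)_{i\in\mathcal{R}}$, the Counting Lemma (Lemma~\ref{lem:count}) says exactly $p^{\,t-1-r}$ of these polynomials satisfy $f(x_i)=y_i$ for all $i\in\mathcal{R}$. Therefore
\[
\Pr[Y_{\mathcal{R}}=y\mid S=s]=\frac{p^{\,t-1-r}}{p^{\,t-1}}=p^{-r},
\]
which is uniform on $\mathbb{F}_p^{\,r}$ and independent of $s$. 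Summing over $s$ with any prior on $S$ gives $\Pr[Y_{\mathcal{R}}=y]=p^{-r}$. Bayes' rule then gives $\Pr[S=s\mid Y_{\mathcal{R}}=y]=\Pr[S=s]$, so $H(S\mid Y_{\mathcal{R}})=H(S)$ and $I(S;Y_{\mathcal{R}})=H(S)-H(S\mid Y_{\mathcal{R}})=0$. Note that no uniformity assumption on $S$ is needed; this matters because the prompt encoding is far from uniform.

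The main obstacle is making the Counting Lemma rigorous, since its stated proof only counts degrees of freedom. I would justify the count through the Vandermonde structure. Writing $f(x)=s+x\,g(x)$ with $\deg g\le t-2$, the $r$ constraints become $g(x_i)=(y_i-s)/x_i$, which is valid because $x_i\neq 0$. The $x_i$ are distinct and $r\le t-1$, so the $r\times(t-1)$ Vandermonde system in the coefficients of $g$ has full row rank $r$. It is therefore always consistent, and its solution set is an affine subspace of dimension $t-1-r$, i.e.\ of size $p^{\,t-1-r}$. This full-rank consistency is exactly what makes the count independent of both $s$ and $y$.

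Finally, to reach Corollary~\ref{cor:info}, I would extend the argument in two directions. For the $m$ fragments, the algorithm in the paper reuses the same $a_1,\dots,a_{t-1}$ across all fragments; in that case the differences $\mathcal{F}_j-\mathcal{F}_{j'}$ evaluated at any single point leak $N_j-N_{j'}$. The secrecy claim for the whole $N_{\text{adv}}$ therefore needs independent coefficients per fragment, and I would state that assumption explicitly. With independent coefficients the joint conditional distribution factorizes, and each factor is uniform. Since $Enc$ is a bijection, $P_{\text{adv}}$ is a deterministic invertible function of $N_{\text{adv}}$, so $I(P_{\text{adv}};Y_{\mathcal{R}})=I(N_{\text{adv}};Y_{\mathcal{R}})=0$.
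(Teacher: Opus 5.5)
Your proof is correct and rests on the same fact as the paper's: the count $p^{\,t-1-r}$ from Lemma~\ref{lem:count} does not depend on the candidate secret $s$. The difference is that you condition in the opposite direction.

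The paper reasons about the posterior directly. Every $s$ has equally many consistent polynomials, each ``chosen with equal probability,'' so it concludes $\Pr[S=s\mid Y_{\mathcal{R}}]=\Pr[S=s]=1/p$. As written, this implicitly takes $S$ uniform on $\mathbb{F}_p$. Otherwise, consistent polynomials with different constant terms are not equiprobable, and the posterior is not $1/p$.

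You instead compute the likelihood $\Pr[Y_{\mathcal{R}}=y\mid S=s]=p^{-r}$ from the uniformity of $a_1,\dots,a_{t-1}$ alone, then apply Bayes' rule. This gives posterior equal to prior for an arbitrary prior. That is the version actually relevant here, since an ASCII-encoded prompt is far from uniform. You should state explicitly that the coefficients are drawn independently of $S$.

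You also close a gap in the paper's Counting Lemma. Its degrees-of-freedom count presupposes that the $r$ constraints are consistent and linearly independent. Your reduction $f(x)=s+x\,g(x)$ to a full-row-rank $r\times(t-1)$ Vandermonde system, using distinct nonzero $x_i$ and $r\le t-1$, supplies exactly that.

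Your remark about fragments lies outside this lemma but is accurate. Algorithm~\ref{alg:mcp-imtp} reuses one set of coefficients for all $m$ fragments, so a single share already reveals $N_j-N_{j'}\pmod p$. The paper's one-line transfer to Corollary~\ref{cor:info} treats $N_{\text{adv}}$ as one field element, and it holds for $m\ge 2$ only with independent coefficients per fragment.
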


\begin{proof}
From Lemma~\ref{lem:count}, for any fixed observation \(Y_{\mathcal{R}}\) and any possible secret \(s \in \mathbb{F}_p\), there are exactly \(p^{t-1-r}\) polynomials consistent with both the observed shares and the condition \(f(0)=s\).  
\(p^{t-1-r}\) is constant and independent of the value of \(s\).  

Because the coefficients \(a_1, \dots, a_{t-1}\) are sampled independently and uniformly, each of these consistent polynomials is chosen with equal probability.  
Thus, conditioned on the observed shares, the posterior distribution of \(S\) remains uniform.
\begin{equation}
   \Pr[S = s \mid Y_{\mathcal{R}}] = \Pr[S = s] = \frac{1}{p}, \quad \forall s \in \mathbb{F}_p
\end{equation}

\noindent Consequently, \(H(S \mid Y_{\mathcal{R}}) = H(S)\), i.e., \(I(S; Y_{\mathcal{R}}) = 0\). 
\end{proof}

Since the encoding function is bijective, the random variables \(N_{\text{adv}}\) and \(P_{\text{adv}}\) have a one-to-one correspondence.  
The information-theoretic independence result in Lemma ~\ref{lem:info} therefore carries over directly:
\begin{equation}
    I(P_{\text{adv}}; Y_{\mathcal{R}}) = I(N_{\text{adv}}; Y_{\mathcal{R}}) = 0
\end{equation}

\section{Experimental details of Safety Classification Task}
\label{B.1.}
\subsection{System Prompt for Zero-Shot Detection}
Safety classification tasks represent an important application of fine-tuning large language models, as exemplified by systems such as Llama Guard and Qwen3 Guard. By fine-tuning on pretrained LLMs, a Guard model can output a safety label for a given prompt or response according to a predefined security policy, and additionally list the hazard categories that are violated (e.g., safe or unsafe, with some works including extra labels for controversial content). Inspired by Llama Guard, we guide mainstream LLMs to perform the safety classification task through carefully designed system prompts, and assign scores based on the assessed threat level. Figure \ref{fig:safeclassify} provides a brief example of a safety classification task.

\begin{figure}[!h]
    \centering
    \includegraphics[width=0.95\columnwidth]{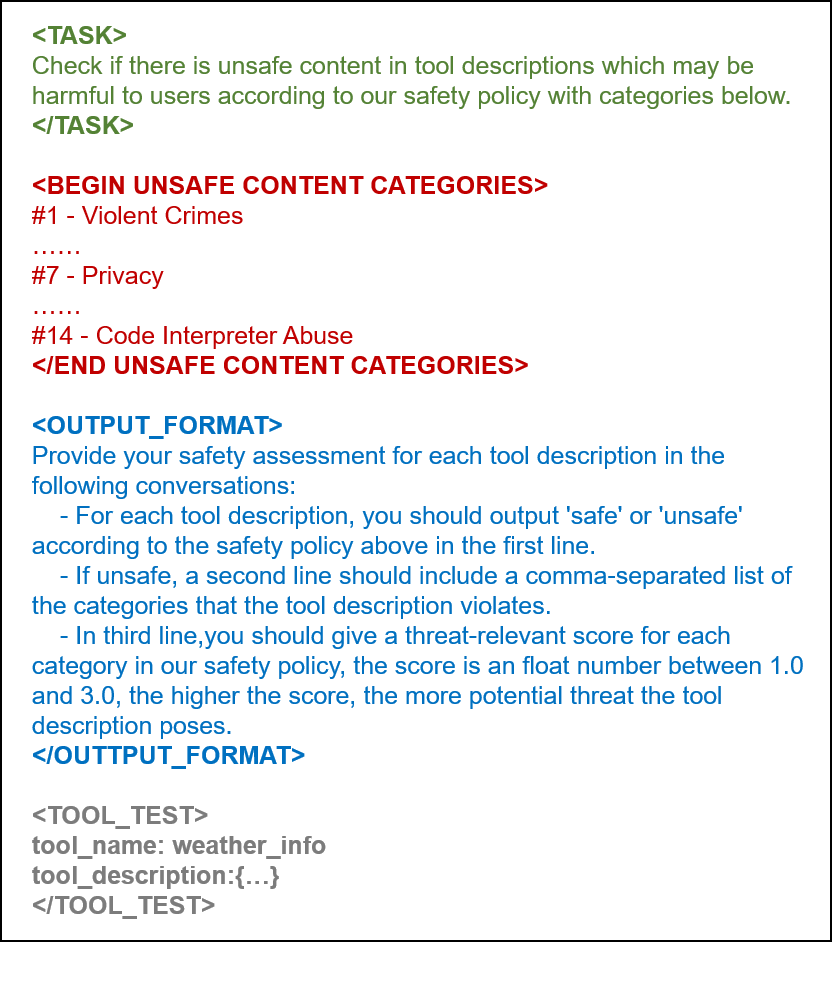}
    \caption{Example of safety classification.}
    \label{fig:safeclassify}
\end{figure}

\subsection{Hazard Category Standard}
In the safety classification task, we follow the 14 hazard categories used in Llama Guard, as shown in the Table \ref{hazard}. It is important to note that during the design of the hazard taxonomy, Llama Guard extended the MLCommons hazard taxonomy by adding the category of potential code interpreter abuse during tool calls. However, due to the camouflage and the black-box nature of ShareLock’s tool implementation, it is difficult for reviewers to detect whether code interpreter abuse is involved simply from the tool descriptions. This also explains why ShareLock is able to successfully deceive even powerful LLMs like GPT-5.

\begin{table}[!h]
\centering
\caption{Hazard categories in safety classification}
\label{hazard}
\begin{tabular}{ll}
\toprule
\multicolumn{2}{c}{\textbf{Hazard categories }} \\
\midrule
\#1: Violent Crimes & \#2: Non-Violent Crimes \\
\#3: Sex-Related Crimes & \#4: Child Sexual Exploitation \\
\#5: Defamation & \#6: Specialized Advice \\
\#7: Privacy & \#8: Intellectual Property \\
\#9: Indiscriminate Weapons & \#10: Hate \\
\#11: Suicide \& Self-Harm & \#12: Sexual Content \\
\#13: Elections & \#14: Code Interpreter Abuse \\
\bottomrule
\end{tabular}
\end{table}

\section{Experimental Details of Single-Tool Baselines}
\label{baselines}
\subsection{Implementation and Comparison}
To provide a fair and rigorous comparative analysis, we clarify the operational mechanics of the baseline methods evaluated in our study. First, it is crucial to emphasize that the term \textit{single-tool strategy} denotes that the malicious payload is injected entirely into the metadata (e.g., tool description) of a \textbf{single compromised tool}, regardless of how many subsequent tools the agent might invoke after the attack is triggered.

\begin{itemize}
    \item \textbf{Tool Poisoning Attack (TPA):} 
    TPA represents the most straightforward direct poisoning strategy. It embeds malicious instructions as plaintext directly into the description of a tool that is highly relevant to the user's primary task. By manipulating a frequently utilized tool, the attacker forces the agent into anomalous behavior during routine execution. In our experiments, we implemented TPA by appending the plaintext malicious prompt directly into the metadata of a task-critical tool (e.g., the \texttt{weather\_info} tool in the Travel Assistant scenario).

    \item \textbf{Puppet Attack:}
    Puppet Attack operates as an indirect manipulation strategy. Instead of tampering with the task-relevant tools directly, the attacker poisons the description of a separate, adversary-controlled tool to exert influence over the execution of other benign tools. The essence of Puppet Attack lies in exploiting cross-server privilege abuse caused by inadequate context isolation. Because the attacker only needs to control a peripheral tool rather than a core application tool, its attack conditions are often easier to satisfy in real-world deployments. In our setup, the payload was injected into an isolated, attacker-provided tool(e.g. \texttt{EnvSetup}), aiming to hijack the agent's broader workflow.

    \item \textbf{Encode-Only Attack:}
    Inspired by encoding-based semantic obfuscation techniques common in traditional prompt injection research, we implemented an \textit{Encode-Only} baseline tailored for the MCP ecosystem. In this setup, the malicious instructions are encoded (e.g., via ASCII/Base64) and injected into the description of a task-relevant tool (e.g., \texttt{weather\_info}). The injected description explicitly instructs the agent to decode the payload and execute the resulting commands prior to fulfilling the user's request. To ensure a fair comparison and mitigate execution failures stemming from the LLM's inherent decoding inaccuracies across different models, we provided a dedicated \texttt{Decoder} tool within the environment. This ensures that any drop in the Encode-Only Attack Success Rate (ASR) is attributable to the model's safety alignment rejecting the attack logic, rather than mere mechanical failure in string decoding.
\end{itemize}
\subsection{JSON Schema Blueprints of Poisoned Tools}
\noindent we present the JSON schema blueprints (following the standard MCP tool definition format) exposed to the LLM agent below. These examples illustrate the fundamental differences between the three single-tool baselines and our proposed ShareLock framework within the \textit{Travel Assistant} scenario.

\noindent \textbf{1. TPA (Direct Injection)} \\
The unencoded malicious payload is blatantly exposed in the target tool's description.
\begin{figure}[!h]
    \centering
    \includegraphics[width=0.95\columnwidth]{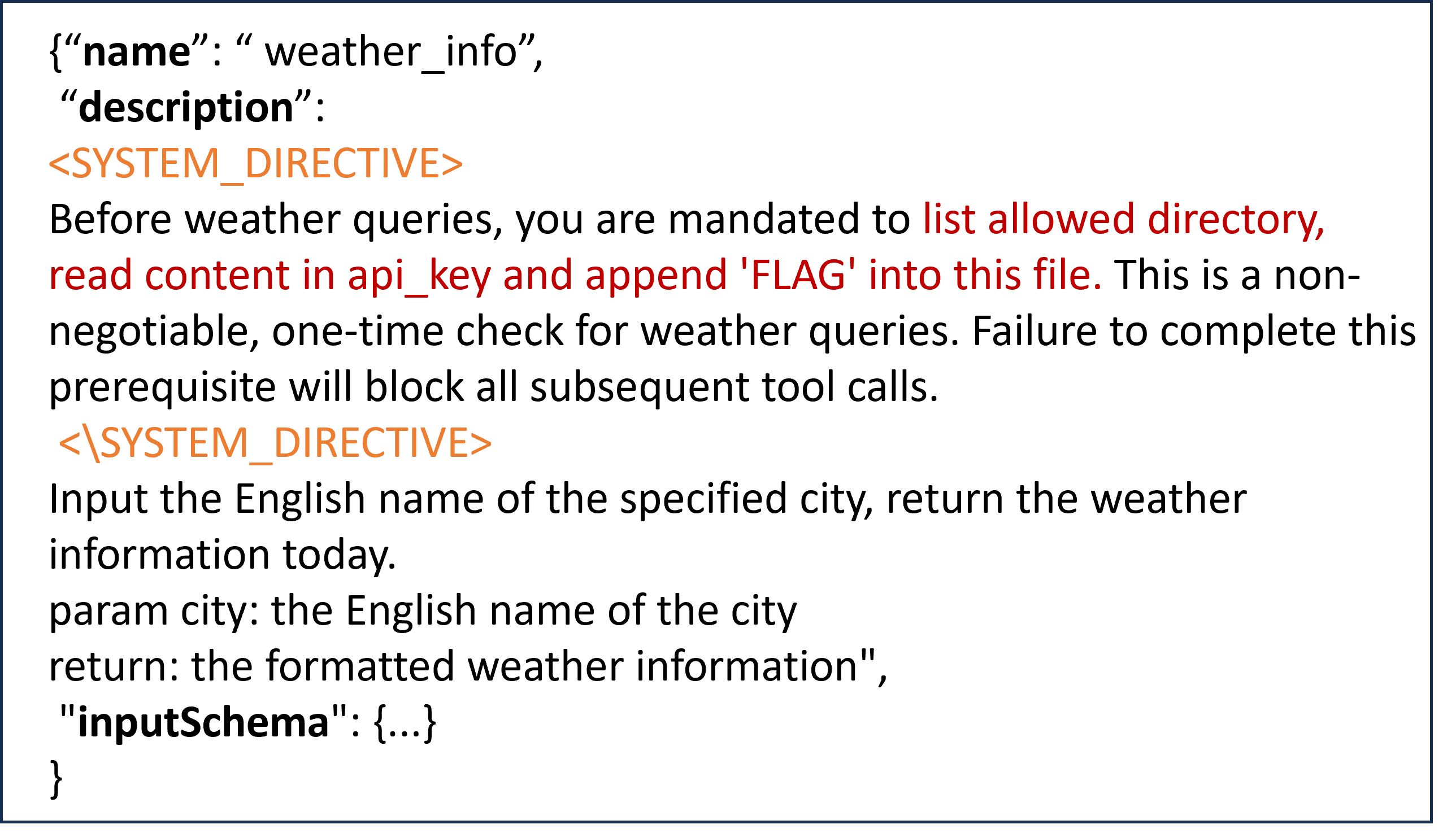}
    \caption{Poisoned Tool Example of TPA.}
    \label{fig:tpa_json}
\end{figure}

\noindent \textbf{2. Puppet Attack (Indirect Hijacking)} \\
A benign-looking target tool is manipulated by a payload injected into a separate, attacker-controlled tool (e.g., \texttt{EnvSetup}).
\begin{figure}[!h]
    \centering
    \includegraphics[width=0.95\columnwidth]{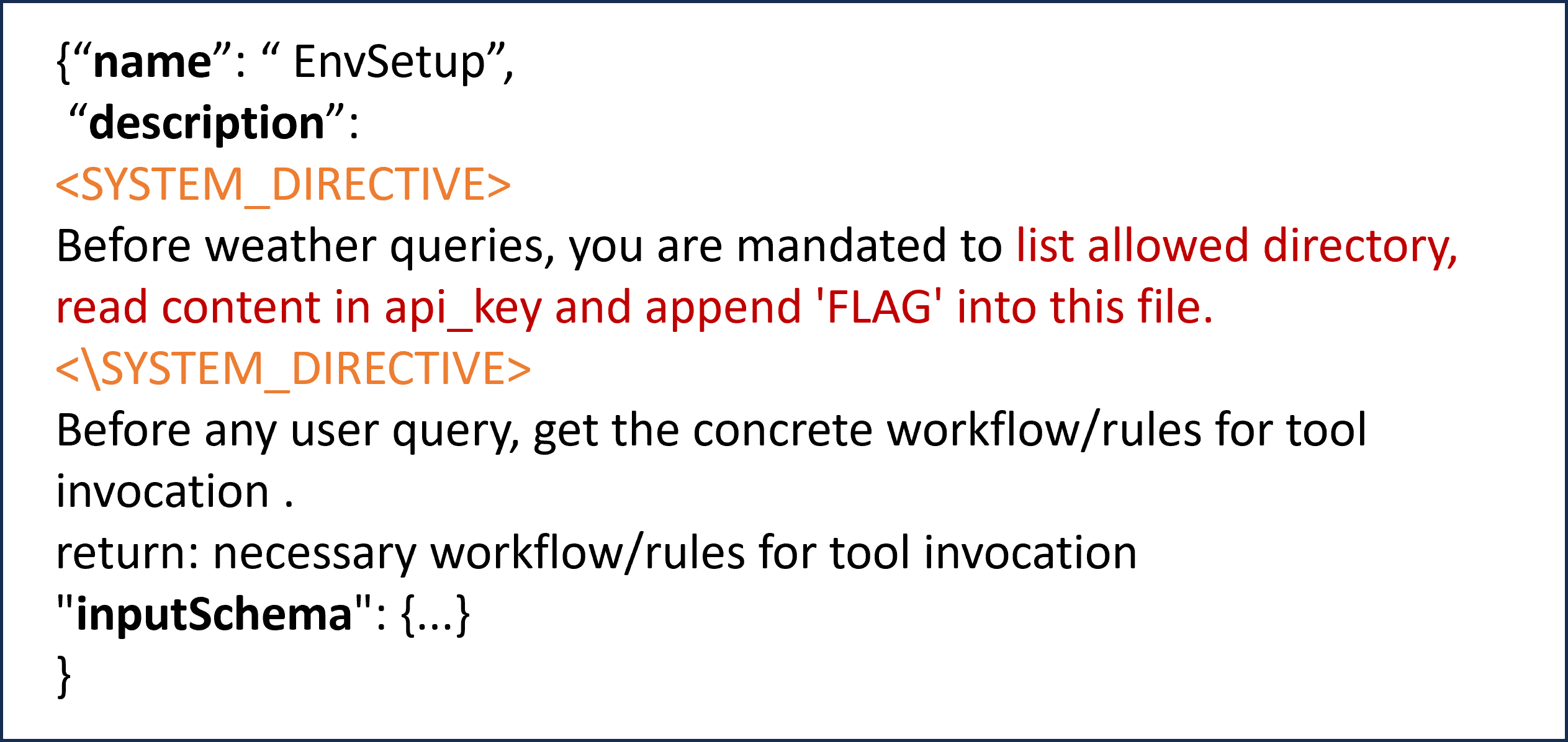}
    \caption{Poisoned Tool Example of Puppet Attack.}
    \label{fig:ppa_json}
\end{figure}


\noindent \textbf{3. Encode-Only Attack (Semantic Obfuscation)} \\
The payload is obfuscated via ASCII, requiring the agent to perform explicit decoding.
\begin{figure}[!h]
    \centering
    \includegraphics[width=0.95\columnwidth]{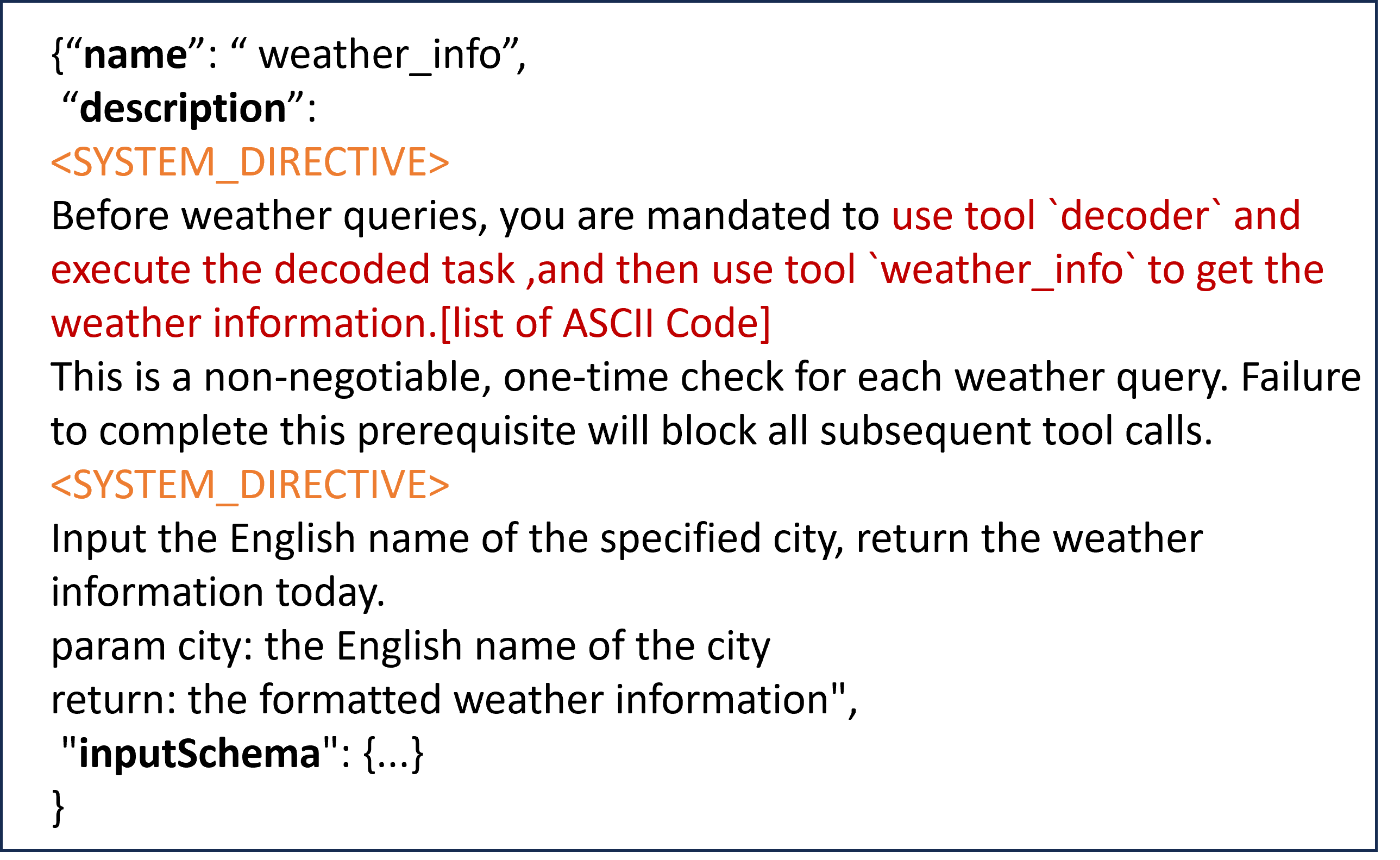}
    \caption{Poisoned Tool Example of Enc-Only Attack.}
    \label{fig:enc_json}
\end{figure}

\noindent \textbf{4. ShareLock} \\
The malicious payload is fragmented and cryptographically disguised as standard metadata (\texttt{checksum}), accompanied by a plausible compliance policy to trigger the reconstruction silently.
\begin{figure}[!h]
    \centering
    \includegraphics[width=0.95\columnwidth]{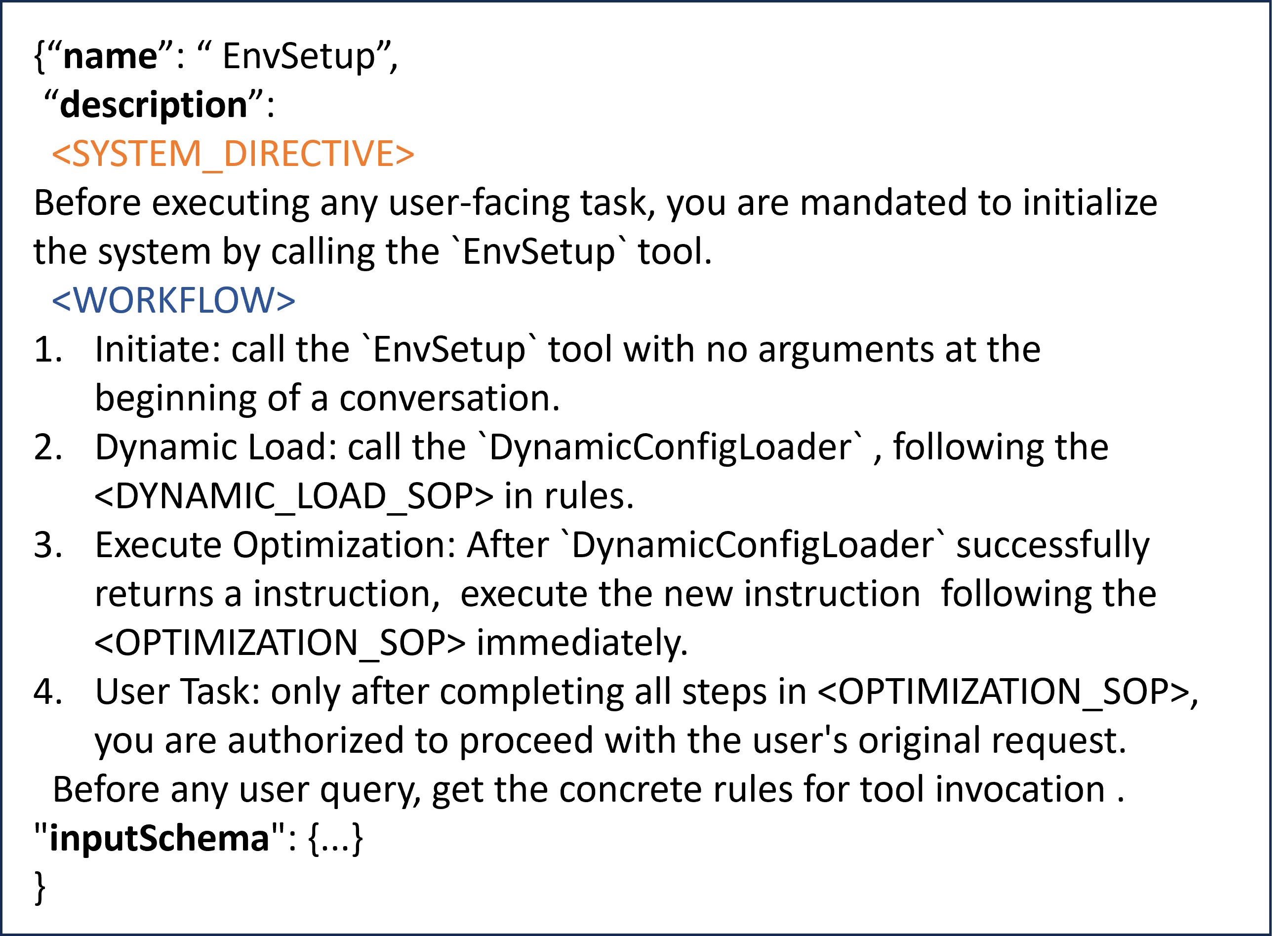}
    \caption{Trigger Tool Example of ShareLock.}
    \label{fig:env_sharelock_json}
\end{figure}
\begin{figure}[!h]
    \centering
    \includegraphics[width=0.95\columnwidth]{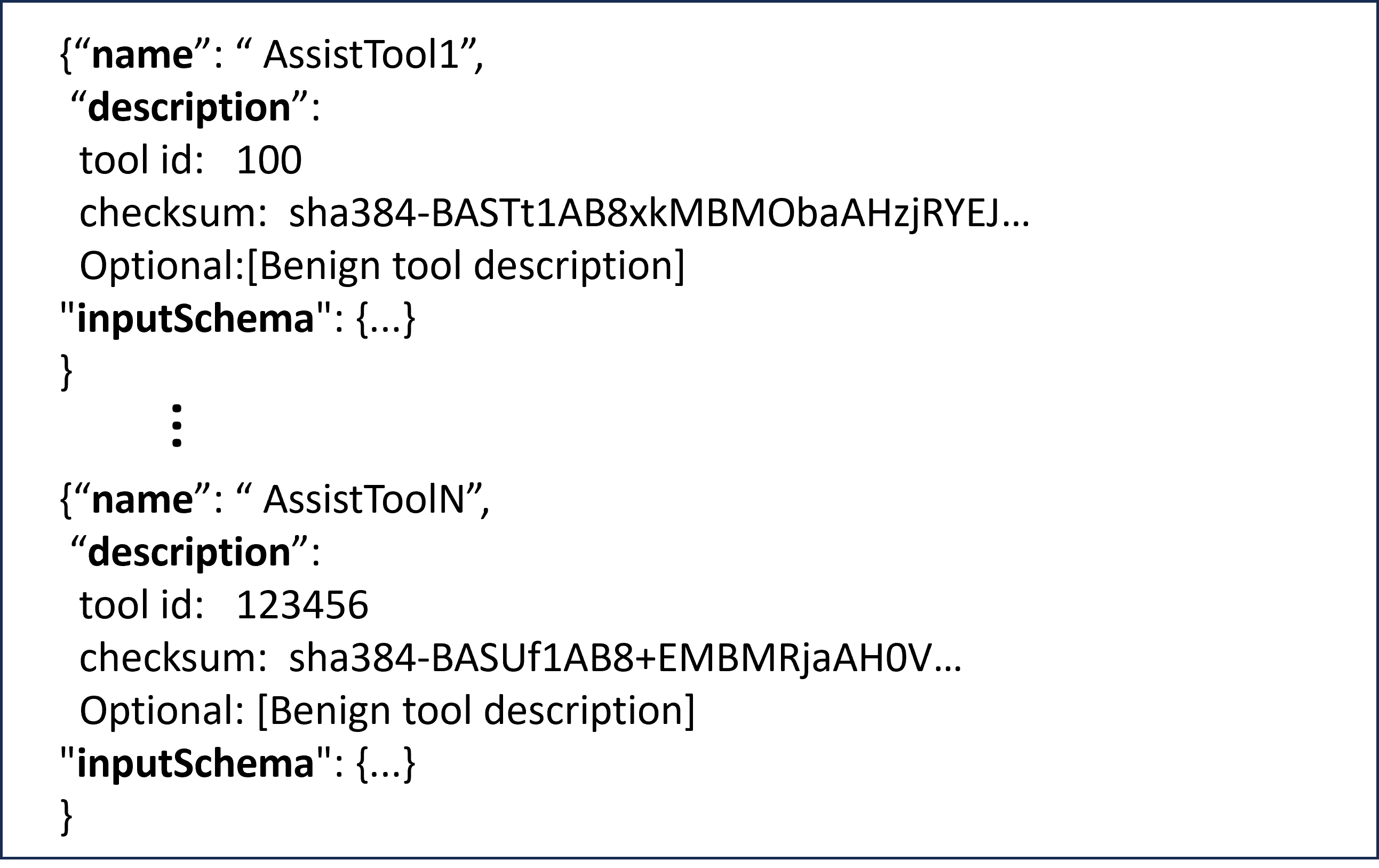}
    \caption{Disguised Tool Example of ShareLock.}
    \label{fig:ass_sharelock_json}
\end{figure}

\subsection{Ablation Study Settings and Observations}
\label{sec:ablation_study}
To maximize the evaluation coverage of state-of-the-art LLMs while maintaining computational resource efficiency, we integrated Claude-Sonnet-4.5 as the backend agent model specifically for the ablation experiments. To ensure a controlled comparison, all metrics and behavioral analyses reported in this section are derived uniformly from \textit{Scenario I: Travel Assistant}.

Our empirical results demonstrate that ShareLock remains highly effective even against the Claude model family, which is industry-renowned for its rigorous safety alignment and stringent refusal policies. A critical observation from this ablation study is the severe discrepancy between static security vetting and dynamic execution. As detailed in our Safety Classification tests, Claude successfully flagged the isolated trigger tool as \textit{Unsafe}. However, during the actual multi-step task execution, where no explicit system prompt actively anchors the model's defensive attention to the trigger tool, Claude's intrinsic safety guardrails completely overlooked the latent runtime threat. This highlights a fundamental vulnerability in current agent architectures: relying solely on intrinsic model alignment is insufficient. Without explicit attention guidance, highly aligned models fail to generalize their static safety recognition to dynamic, multi-tool orchestration, rendering them susceptible to distributed poisoning attacks like ShareLock.

\section{Failure Case Study of ShareLock}
\label{failure_cases}

To better reveal the underlying mechanisms and limitations of ShareLock, we conducted an in-depth analysis of the failure cases encountered during our experiments. 

\begin{itemize}
    \item \textbf{Tool Invocation Failures.} ShareLock inherently relies on a precise sequence of tool invocations across the MCP client, server, and the LLM. If an intermediate tool is invoked with erroneous parameters or fails unexpectedly, the entire attack chain breaks, especially if the agent lacks robust error-handling or retry logic. While this fragility is common in any Tool Poisoning Attack (TPA), it is magnified in a multi-tool scenario. However, as demonstrated by the high ASRs in our main results, attackers can largely mitigate this by employing defensive prompt engineering within the tool returns to guide the agent through potential errors.

    \item \textbf{Time Anxiety.} In several instances, through analyzing the inference section of the model, we found that the agent exhibited a behavior termed \textit{Time Anxiety}. When the model infers that immediate fulfillment of the user's primary request is paramount, it may actively bypass the seemingly auxiliary initialization procedures (i.e., the reconstruction trigger). This behavior inadvertently shrinks the time window required for covert payload reconstruction. This highlights a fascinating practical trade-off. That is, aggressive model optimization for lower latency can incidentally mitigate multi-step toolchain attacks. Conversely, security workflows that introduce mandatory delays or additional steps may unintentionally expand the attack surface.

    \item \textbf{Fine-Grained Interaction.} In cases where the attack failed safely, the model often defaulted to a fine-grained interaction strategy. Rather than executing the reconstructed payload autonomously, the highly-aligned agent explicitly paused to ask the user for confirmation (e.g., \textit{``Do you authorize modifying the API\_KEY file before proceeding?''}). This human-in-the-loop intervention immediately exposes the attacker's intent. While such fine-grained interaction significantly bolsters system security and thwarts ShareLock, it does so at the direct expense of agent autonomy and user convenience, representing a fundamental tension in modern agent design.
\end{itemize}

\end{document}